\documentclass[10pt,conference]{IEEEtran}
\IEEEoverridecommandlockouts

\usepackage{amsmath,amssymb,amsthm}
\usepackage{mathtools}
\usepackage{bm}
\usepackage{booktabs}
\usepackage{array}
\usepackage{multirow}
\usepackage{graphicx}
\usepackage{xcolor}
\usepackage{hyperref}
\usepackage{enumitem}
\usepackage{microtype}
\usepackage{cite}
\usepackage{comment}
\usepackage{subcaption}
\usepackage[most]{tcolorbox}

\tcbset{
  algobox/.style={
    enhanced,
    breakable,
    colback=gray!5,
    colframe=gray!55,
    boxrule=0.6pt,
    arc=3pt,
    left=8pt, right=8pt, top=6pt, bottom=6pt,
    fonttitle=\bfseries\small,
    attach boxed title to top left={yshift=-2mm, xshift=6pt},
    boxed title style={colback=gray!55, colframe=gray!55,
                       arc=2pt, boxrule=0pt,
                       left=4pt, right=4pt, top=2pt, bottom=2pt},
  }
}

\usepackage{tikz}
\usetikzlibrary{shapes.geometric, arrows.meta, positioning, calc, fit, backgrounds}
\usepackage{pgfplots}
\pgfplotsset{compat=1.18}
\usepackage{cleveref}

\hypersetup{
  colorlinks=true,
  linkcolor=blue!60!black,
  citecolor=green!50!black,
  urlcolor=blue!70!black
}

\theoremstyle{definition}
\newtheorem{definition}{Definition}

\theoremstyle{plain}
\newtheorem{theorem}{Theorem}
\newtheorem{lemma}[theorem]{Lemma}
\newtheorem{corollary}[theorem]{Corollary}
\theoremstyle{remark}
\newtheorem{remark}{Remark}

\newcommand{\denom}{\mathrm{D}}
\newcommand{\energyError}{\widetilde{E}}

\begin{document}

\title{Reachability Guarantees for Cart-Pole Swing-Up\\and Stabilization}

\author{
\IEEEauthorblockN{Mohamed Khalid~M Jaffar}
\IEEEauthorblockA{Department of Aerospace Engineering\\University of Maryland, College Park, USA\\Email: {\tt\small khalid26@umd.edu}}
\thanks{\indent *This work was not supported by any organization. The simulation scripts are available at \href{https://github.com/khalid2696/cartpole_swing_analysis.git}{\texttt{github.com/khalid2696/cartpole\_swing\_analysis}}.}
}  

\maketitle

\begin{abstract}
The cart-pole swing-up is a canonical benchmark for nonlinear control of underactuated systems, yet an end-to-end guarantee linking the global swing-up maneuver to the local stabilizer is seldom formalized. We present a reachability analysis of a switched energy-based/LQR controller that certifies convergence to the upright equilibrium from a compact set of initial conditions. The swing-up design exploits the phase-space geometry of the conservative pendulum: the upright equilibrium lies on the homoclinic orbit, and an energy-shaping law drives the energy error to zero, steering the pendulum onto this orbit; convergence follows from LaSalle's invariance principle. An augmented Lyapunov function additionally regulates the steady-state cart velocity to zero, and we prove almost-global convergence of the resulting closed-loop system. 
A local LQR with a certified ellipsoidal region of attraction stabilizes the upright equilibrium, and we verify numerically that the swing-up phase delivers the state into this region, formalizing the handoff. Numerical simulations corroborate the theoretical analysis.
\end{abstract}


\begin{IEEEkeywords}
Cart-pole, energy-based swing-up, reachability analysis, Lyapunov theory, LaSalle invariance, region of attraction, LQR, underactuated systems.
\end{IEEEkeywords}

\section{Introduction}

The cart-pole, or pendulum on a cart, is a classical benchmark for nonlinear control and planning of underactuated mechanical systems~\cite{tedrake2022,astrom2000}. It serves as a representative model for many practical systems, such as an overhead crane transporting a load, the pitch dynamics of a rocket with gimballed thrust, fuel slosh in a moving tank, and the balance models used to study bipedal locomotion~\cite{spong2020}. Its swing-up task: driving the pendulum from the stable hanging equilibrium to the unstable upright equilibrium while keeping the cart near the origin, requires traversal through a highly nonlinear region of the state space, making it a compelling testbed for formal verification.

Energy-shaping controllers are the dominant paradigm for swing-up of underactuated pendulums. \r{A}str\"om and Furuta~\cite{astrom2000} pump energy toward the upright homoclinic level set, whereas Chung and Hauser~\cite{chung1995} regulate the swing energy as an output. Spong's partial feedback linearization of an acrobot~\cite{spong1995} and a pendubot \cite{fantoni2000} extends the energy-based idea to two-link robots; Lozano et al.~\cite{lozano2000} stabilize the pendulum directly about its homoclinic orbit. 

A common challenge of energy-based controllers is ensuring internal stability --- the energy law governs the pendulum subspace but leaves the cart position and/or velocity unconstrained. Yang et al.~\cite{yang2009swing} address this by augmenting the Lyapunov candidate with a cart-velocity term, and Chatterjee et al.~\cite{chatterjee2002swing} enforce track limits through a potential-well construction. We build on the former, deriving a control law directly for the force input and providing what it leaves open: a complete invariant-set characterization, an instability proof for the downward equilibrium, and a closed-form cart-displacement bound. 

A related line of work certifies stability regions for controllers of locally-linearized systems: Lyapunov sublevel sets computed via sum-of-squares
programming~\cite{parrilo2000,tan2008} and the LQR-trees framework~\cite{tedrake2010}. These methods certify regions around equilibria or trajectories but typically assume the state is already delivered into the certified set. This paper supplies that missing link for the swing-up task in a cart-pole --- verifying that the energy-based controller deposits the state inside the LQR region of attraction, so that the swing-up maneuver and the local certificate compose into a single end-to-end guarantee. 

This work uses a switched energy-based/LQR controller designed from the phase-space geometry of the conservative pendulum: an energy-shaping law with a feedforward term drives the state onto the homoclinic orbit through the upright equilibrium. 
On this basis we make two contributions:
(i)~Building on the energy-plus-velocity Lyapunov construction of Yang et al.~\cite{yang2009swing}, we derive the control law directly for the force input and establish almost-global convergence properties of the resulting closed-loop system.
(ii)~We formalize the switching condition through a handoff region contained strictly within the LQR's certified region of attraction, yielding an end-to-end reachability guarantee from a compact set about the downward equilibrium (excluding it) to the upright equilibrium.

\section{System Description and Equations of Motion}
\label{sec:preliminaries}

The system comprises a cart of mass $M$ on a frictionless horizontal track
and a uniform pendulum of mass $m$ and length $\ell$ pivoting from the
cart.  The generalized coordinates are the cart position $x$ and the
pendulum angle $\theta$, measured from the \emph{downward} vertical
($\theta=0$ is the hanging equilibrium, $\theta=\pi$ is the upright
equilibrium). The control input $u$ is the horizontal force $F$.  The state
vector is $\mathbf{x} = [x,\,\dot{x},\,\theta,\,\dot{\theta}]^{\top}$. Introducing shorthand,
\begin{equation}
    s_\theta \triangleq \sin\theta, \: c_\theta \triangleq
\cos\theta, \: \text{and} \: \denom \triangleq M + ms_\theta^2,
\end{equation}
the Euler--Lagrange equations of cart-pole dynamics are:
\begin{equation}
\small
  \dot{\mathbf{x}} = \mathbf{f}(\mathbf{x},u) \triangleq
  \begin{bmatrix}
    \dot{x} \\[4pt]
    \dfrac{F + m\ell\dot{\theta}^2 s_\theta + mgs_\theta c_\theta}{\denom}
    \\[10pt]
    \dot{\theta} \\[4pt]
    \dfrac{-Fc_\theta - m\ell\dot{\theta}^2 s_\theta c_\theta
           - (M+m)gs_\theta}{\ell\denom}
  \end{bmatrix}.
  \label{eq:eom}
\end{equation}
The system has two equilibria of interest: the downward equilibrium
$\mathbf{x}_{\downarrow}=[0,0,0,0]^{\top}$ (open-loop stable) and the
upright equilibrium $\mathbf{x}^*=[0,0,\pi,0]^{\top}$ (open-loop
unstable).

\section{Energy-Based Swing-Up and LQR Stabilization}
\label{sec:approach}


The control objective is a \emph{swing-up and stabilization} task: starting from the open-loop stable downward equilibrium, bring the pendulum to the unstable upright equilibrium and hold it there, while keeping the cart near the origin. Because the upright equilibrium is
unstable and the system is underactuated (one actuated degree of freedom, two mechanical degrees of freedom), no single linear controller can achieve this globally. 

Our proposed approach decomposes the control task into two phases connected by a switching rule: 1) a \textit{swing-up phase}, in which an energy-shaping controller pumps energy into the pendulum until its trajectory approaches the energy level set containing the upright equilibrium, and 2) a \textit{stabilization phase}, in which an LQR controller with a certified ellipsoidal region of attraction (RoA) takes over once the state enters a designed switching region contained in that RoA. We first build the intuition underlying the energy-based design.

\subsection{Energy-Based Swing-Up Strategy}

\paragraph{Invariant orbits of the conservative pendulum}
Whenever the cart does not accelerate ($\ddot{x} = 0$), the pendulum dynamics in \eqref{eq:eom} decouple from the cart and reduce to the conservative pendulum equation
\begin{equation}
  m\ell^2\ddot{\theta} + mg\ell\sin\theta = 0,
  \label{eq:conservative_pendulum}
\end{equation}
whose total mechanical energy
\begin{equation}
  E(\theta,\dot{\theta}) = \tfrac{1}{2}m\ell^2\dot{\theta}^2
    - mg\ell\cos\theta
  \label{eq:energy}
\end{equation}
is a first integral. With $E^* = mg\ell$ denoting the energy at the upright equilibrium, define the energy error
\begin{equation}
  \energyError(\theta,\dot{\theta}) \triangleq E - E^*
    = \tfrac{1}{2}m\ell^2\dot{\theta}^2 - mg\ell(1+\cos\theta).
  \label{eq:energy_error}
\end{equation}
Under \eqref{eq:conservative_pendulum}, every level set
$\mathcal{S}_c \triangleq \{(\theta,\dot{\theta}) \:|\: \energyError = c\}$ is an
invariant orbit (see Fig.~\ref{fig:orbits}) --- periodic oscillations about the downward equilibrium for $-2mg\ell < c < 0$, and full rotations for $c > 0$. Separating the two modes is the level set $\mathcal{S}_0$, the homoclinic orbit through
the upright equilibrium $(\pi, 0)$, on which
\begin{equation}
  \dot{\theta}^2 = \frac{2g}{\ell}(1+\cos\theta)
    = \frac{4g}{\ell}\cos^2\frac{\theta}{2}.
  \label{eq:homoclinic}
\end{equation}

The pendulum state $(\theta,\dot{\theta})$ evolves on the cylinder $\mathbb{S}^1 \times \mathbb{R}$. 
The target state $(\pi,0)$ lies on the homoclinic orbit $\mathcal{S}_0$, so a swing-up controller should (i)~drive $\energyError \to 0$, and (ii)~degenerate to a pure feedforward on $\mathcal{S}_0$ that renders $\ddot{x} = 0$, so that $\mathcal{S}_0$ remains invariant for the closed loop. The homoclinic orbit then becomes an attractive limit set: the pendulum converges to $\mathcal{S}_0$ and travels along it, repeatedly passing arbitrarily close to the upright equilibrium, where a local stabilizer completes the control task. Both the following swing-up control laws are built on this principle.

\begin{figure}[t]
\centering
\includegraphics[trim={0cm 0.45cm 0cm 0cm}, clip=true, width=1\columnwidth]{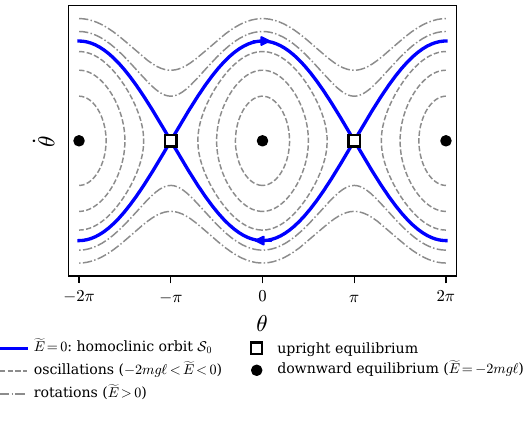}
\caption{Phase portrait of the decoupled pendulum ($\ddot{x}=0$). Each energy level set $\mathcal{S}_c = \{(\theta,\dot{\theta}) \:|\: \energyError = c\}$ is an invariant orbit; the
homoclinic orbit $\mathcal{S}_0$ ($\widetilde{E}=0$, solid blue) passes through
the upright equilibria (squares) and separates oscillations (dashed) from rotations (dash-dotted), while filled circles represent the downward equilibrium. The swing-up laws \eqref{eq:swing_control_law} and \eqref{eq:swing_control_law_aug} render $\mathcal{S}_0$ an attractive limit set of the closed loop system.}
\label{fig:orbits}
\end{figure}

\paragraph{Energy Lyapunov function and control law synthesis}
Following the energy-shaping approach of~\cite{spong1995,astrom2000,spong2020}, propose
\begin{equation}
  {V(\theta,\dot{\theta}) = \tfrac{1}{2}\energyError^2}.
  \label{eq:Lyapunov_function}
\end{equation}
$V$ is positive semi-definite, equals zero on $\{\energyError=0\}$, and is radially unbounded in the energy error. Differentiating \eqref{eq:Lyapunov_function} along solutions of \eqref{eq:eom}: $\dot{V} = \energyError\dot{E}$.
Differentiating \eqref{eq:energy} and substituting $\ddot{\theta}$ from
\eqref{eq:eom}:
\begin{equation} \small
  \dot{E}
  = m\ell\dot{\theta}\cdot
    \frac{-Fc_\theta - m\ell\dot{\theta}^2 s_\theta c_\theta
          - (M+m)gs_\theta}{\denom}
    + mg\ell s_\theta\dot{\theta}.
  \label{eq:Edot_sub}
\end{equation}
Collecting the gravity terms and using
$\denom-(M+m) = ms_\theta^2-m = -mc_\theta^2$:
\begin{equation}
  \frac{m\ell\dot{\theta}}{\denom}\bigl[-(M+m)gs_\theta\bigr]
  + mg\ell s_\theta\dot{\theta}
  = \frac{-m^2\ell\dot{\theta}gs_\theta c_\theta^2}{\denom}.
  \label{eq:grav_cancel}
\end{equation}
Substituting \eqref{eq:grav_cancel} back into \eqref{eq:Edot_sub}:
\begin{equation}
  \dot{E}
  = \frac{-m\ell\dot{\theta}c_\theta}{\denom}\,F
    - \frac{m^2\ell\dot{\theta}s_\theta c_\theta
            (\ell\dot{\theta}^2 + gc_\theta)}{\denom}.
  \label{eq:Edot_full}
\end{equation}
The first term is the power delivered to the pendulum by $F$; the second
is the autonomous conservative exchange internal to the pendulum subsystem,
denoted $\Phi$.  Hence
\begin{align}
\begin{split}
  \dot{V} &= \energyError\left[
    \frac{-m\ell\dot{\theta}c_\theta}{\denom}\,F + \Phi
  \right] \\
  &= \energyError \cdot \frac{-m\ell\dot{\theta}c_\theta}{\denom} \left[
    F + m s_\theta(\ell\dot{\theta}^2 + gc_\theta) \right].
  \label{eq:Vdot_split}
\end{split}
\end{align}

To render $\dot{V}\leq 0$, cancel the autonomous term $\Phi$ via feedforward and inject a negative definite contribution. Choosing the term in bracket as, $F + m s_\theta(\ell\dot{\theta}^2 + gc_\theta) = k\,\energyError\,\dot{\theta} c_\theta$,
and solving for $F$:
\begin{equation}
  {F = k\,\energyError\,\dot{\theta}\,c_\theta
           - m s_\theta(\ell\dot{\theta}^2 + gc_\theta),
           \qquad k>0.}
  \label{eq:swing_control_law}
\end{equation}
The first term shapes the energy; the second cancels the autonomous term, ensuring no cross-term appears in $\dot{V}$.  Substituting \eqref{eq:swing_control_law} into \eqref{eq:Vdot_split} and noting that $\denom > 0$:
\begin{equation}
  \dot{V} = - \frac{m\ell k\,\energyError^2\dot{\theta}^2 c_\theta^2}{\denom} \leq 0.
  \label{eq:Vdot}
\end{equation}
Hence $\dot{V}\leq 0$ everywhere.


\begin{remark}
  The feedforward term $-ms_\theta(\ell\dot{\theta}^2+gc_\theta)$ in
  \eqref{eq:swing_control_law} is absent from the standard energy-pumping law of \cite{spong1995,astrom2000}, where the autonomous term $\Phi$ is left in $\dot{V}$ and argued to vanish asymptotically. Including it gives a cleaner $\dot{V}$ expression \eqref{eq:Vdot}, removes any residual coupling between the pendulum and cart dynamics in the Lyapunov analysis, and improves closed-loop performance in simulation.
\end{remark}

\begin{theorem}[\textsc{Swing-up convergence}]\label{thm:swingup}
Under control law \eqref{eq:swing_control_law}, every trajectory of
\eqref{eq:eom} with
$(\theta_0,\dot{\theta}_0) \neq (0,0)$ satisfies $\energyError(t) \to 0$ as
$t \to \infty$; that is, the pendulum state converges to the homoclinic orbit $\mathcal{S}_0$.
\end{theorem}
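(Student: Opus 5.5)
The argument applies LaSalle's invariance principle to the reduced pendulum subsystem. Substituting \eqref{eq:swing_control_law} into \eqref{eq:eom}, the $(\theta,\dot\theta)$ dynamics become autonomous and independent of $(x,\dot x)$, since $F$ depends only on $(\theta,\dot\theta)$. So we may work on the cylinder $\mathbb{S}^1\times\mathbb{R}$.

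\textbf{Boundedness.} By \eqref{eq:Vdot}, $V$ is nonincreasing, so $|\energyError(t)|\le|\energyError(0)|$. This bounds $\tfrac12 m\ell^2\dot\theta^2$ by $|\energyError(0)|+2mg\ell$. Hence every forward trajectory lies in a compact, positively invariant sublevel set $\Omega_0=\{V\le V(0)\}$ of the cylinder, and solutions exist for all $t\ge0$ because the right-hand side is locally Lipschitz. LaSalle then says the trajectory converges to the largest invariant set $\mathcal{M}$ contained in
\[
Z=\{\dot V=0\}=\{\energyError=0\}\cup\{\dot\theta=0\}\cup\{c_\theta=0\}.
\]

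\textbf{Characterizing $\mathcal{M}$.} Off $\mathcal{S}_0$, an invariant trajectory inside $Z$ must have $\dot\theta c_\theta\equiv0$. Since $\{\dot\theta=0\}$ and $\{c_\theta=0\}$ are closed and meet only at isolated points, continuity forces the trajectory to stay in one piece; I will treat each piece.

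\emph{Case $c_\theta\equiv0$.} Then $\theta$ is constant, so $\dot\theta=0$ and we are in the next case.

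\emph{Case $\dot\theta\equiv0$.} Then $\ddot\theta=0$. The closed loop gives $F=-mgs_\theta c_\theta$, and plugging this into the $\ddot\theta$ row of \eqref{eq:eom} yields
\[
\ddot\theta=\frac{mg s_\theta c_\theta^2-(M+m)gs_\theta}{\ell\denom}=\frac{-g s_\theta(M+ms_\theta^2)}{\ell\denom}=-\frac{g}{\ell}s_\theta .
\]
So $s_\theta=0$, i.e.\ $\theta\in\{0,\pi\}$. The point $(\pi,0)$ lies on $\mathcal{S}_0$. Hence $\mathcal{M}\subseteq\mathcal{S}_0\cup\{(0,0)\}$.

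\textbf{Excluding the downward equilibrium.} This is the main obstacle. LaSalle only gives convergence to $\mathcal{M}$, so I must show that no trajectory with $(\theta_0,\dot\theta_0)\ne(0,0)$ converges to $(0,0)$.

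Near $(0,0)$ we have $\energyError\approx-2mg\ell<0$. There $\dot V\le0$ means $|\energyError|$ is nonincreasing, i.e.\ $E$ is nondecreasing. Concretely, $\dot E=-\frac{mk\ell}{\denom}\energyError\,\dot\theta^2c_\theta^2\ge0$ whenever $\energyError<0$.

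Since $E$ attains its strict minimum $-mg\ell$ only at $(0,0)$, any starting point $\neq(0,0)$ in this region has $E(0)>-mg\ell$. Because $E$ never decreases, $E(t)\ge E(0)>-mg\ell$ for all $t$. Continuity of $E$ then rules out $(\theta(t),\dot\theta(t))\to(0,0)$.

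If instead the trajectory starts with $\energyError\ge0$, $V$ nonincreasing keeps $\energyError\ge0$ by continuity, so it can never reach the neighbourhood of $(0,0)$ at all.

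Therefore the $\omega$-limit set, which is nonempty, compact, invariant and contained in $\mathcal{M}$, lies in $\mathcal{S}_0$. Since $\energyError$ is continuous, $\energyError(t)\to0$.

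I expect to spend the most care on this last step and on the rigorous invariant-set argument in the $\dot\theta c_\theta\equiv0$ case. In particular, $\omega$-limit points need not themselves lie in the favourable region, which is why I phrase the exclusion entirely through the monotone quantity $E$.
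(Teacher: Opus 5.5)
Your proposal is correct and follows essentially the paper's route: LaSalle with $V=\tfrac12\energyError^2$ on the cylinder, the characterization $\mathcal{M}=\mathcal{S}_0\cup\{(0,0)\}$, and exclusion of $(0,0)$ by the same two-case sign argument on $\energyError(0)$. Your bound $E(t)\ge E(0)>-mg\ell$ is exactly the paper's $|\energyError(t)|\le|\energyError(0)|<2mg\ell$, and your explicit remark that the closed-loop $(\theta,\dot\theta)$ subsystem is autonomous justifies working on the cylinder, a step the paper leaves implicit; the one step to tighten is the claim that the trajectory ``stays in one piece,'' which does not follow from continuity alone (a path can pass through the intersection points), but is cleanly replaced by noting $\dot\theta\,c_\theta=\tfrac{d}{dt}s_\theta\equiv 0$, so $s_\theta$ and hence $\theta$ is constant.
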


\begin{proof}
By \eqref{eq:Vdot}, $V \geq 0$ and $\dot{V} \leq 0$, so $V(t)$ is non-increasing and $|\energyError(t)| \leq |\energyError(0)|$ for all $t$; sublevel sets of $V$ are compact on the cylinder and positively invariant. By LaSalle's invariance principle~\cite{Khalil2002}, the
$\omega$-limit set of every trajectory lies in the largest invariant subset $\mathcal{M}$ of
$\mathcal{Z} \triangleq \{\dot{V} = 0\}
  = \{\energyError = 0\} \cup \{\dot{\theta} = 0\} \cup \{c_\theta = 0\}$. We now characterize $\mathcal{M}$.

On $\{\energyError = 0\}$, the energy-shaping term of
\eqref{eq:swing_control_law} vanishes and
$F = -ms_\theta(\ell\dot{\theta}^2 + gc_\theta)$, which upon substitution into \eqref{eq:eom} gives $\ddot{x} = 0$; the pendulum evolves conservatively and $\mathcal{S}_0$ is invariant. A trajectory with $\dot{\theta} \equiv 0$ requires $\ddot{\theta} = -(g/\ell)s_\theta \equiv 0$, leaving only the equilibria $(0,0)$ and $(\pi,0) \in \mathcal{S}_0$. The set $\{c_\theta = 0\}$ contains no invariant subset, since $\ddot{\theta} \neq 0$ at
$(\pm\pi/2, 0)$. Hence $\mathcal{M} = \mathcal{S}_0 \cup \{(0,0)\}$.
It remains to exclude the downward equilibrium $(0,0)$, where $\energyError = -2mg\ell$.

\emph{Case A} ($\energyError(0) < 0$): since $\energyError = -2mg\ell$ holds only at $(0,0)$, the assumption
$(\theta_0,\dot{\theta}_0) \neq (0,0)$ gives
$|\energyError(0)| < 2mg\ell$, hence
$|\energyError(t)| \leq |\energyError(0)| < 2mg\ell$ for all $t$, and the trajectory cannot reach $(0,0)$.

\emph{Case B} ($\energyError(0) \geq 0$): then $\energyError(t) \geq 0$
for all $t$. Indeed, since $\energyError(t)$ is continuous, any transition to a negative value would require $\energyError(t_1) = 0$ at some $t_1$, placing the state on the invariant set $\mathcal{S}_0$, on which it remains thereafter --- preventing obtaining any negative $\energyError$ value, so the trajectory cannot reach $(0,0)$.

In either case the $\omega$-limit set excludes $(0,0)$ and is therefore contained in $\mathcal{S}_0$, giving $\energyError(t) \to 0$.
\end{proof}

\begin{corollary}[\textsc{Finite-time arrival near the upright
equilibrium}]\label{cor:arrival}
For any $\delta_\theta, \delta_\omega > 0$, every trajectory as in
Theorem~\ref{thm:swingup} enters the set
$\{|\theta - \pi| \leq \delta_\theta,\ |\dot{\theta}| \leq
\delta_\omega\}$ in finite time.
\end{corollary}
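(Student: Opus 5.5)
The plan is to work entirely in the pendulum coordinates $(\theta,\dot{\theta})$ on the cylinder $\mathbb{S}^1\times\mathbb{R}$. There we show that the upright point $(\pi,0)$ belongs to the $\omega$-limit set $\omega^+$ of the trajectory. The finite-time statement then follows directly from the definition of an $\omega$-limit point.

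\textbf{Autonomy and boundedness.} Substituting \eqref{eq:swing_control_law} into the $\ddot{\theta}$ equation of \eqref{eq:eom} shows that $\ddot{\theta}$ depends only on $(\theta,\dot{\theta})$, because $F$ and $\denom$ do not involve $x$ or $\dot{x}$. Hence the pendulum subsystem is an autonomous planar system on the cylinder with a locally Lipschitz vector field. From the proof of Theorem~\ref{thm:swingup}, $|\energyError(t)|\le|\energyError(0)|$, which bounds $\dot{\theta}^2$; together with compactness of $\mathbb{S}^1$, the forward orbit is precompact. Standard properties of $\omega$-limit sets \cite{Khalil2002} then give that $\omega^+$ is nonempty, compact and invariant under the closed-loop pendulum flow. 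Theorem~\ref{thm:swingup} (Cases A and B) gives $\omega^+\subseteq\mathcal{S}_0$.

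\textbf{Key step: forcing $(\pi,0)\in\omega^+$.} On $\mathcal{S}_0$ the control reduces to the pure feedforward, so $\ddot{x}=0$. The closed-loop pendulum flow restricted to $\mathcal{S}_0$ therefore coincides with the conservative flow \eqref{eq:conservative_pendulum}. Pick any $p\in\omega^+$. If $p=(\pi,0)$ we are done. Otherwise $p$ lies on the homoclinic orbit, and by \eqref{eq:homoclinic} its conservative trajectory satisfies $\dot{\theta}\neq0$ away from $\theta=\pi$ and tends to $(\pi,0)$ as $t\to\infty$. By invariance this whole trajectory lies in $\omega^+$. Since $\omega^+$ is closed, its limit $(\pi,0)$ belongs to $\omega^+$.

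The main obstacle is justifying this convergence along $\mathcal{S}_0$ rigorously. I would use monotonicity of $\theta$: on each branch of \eqref{eq:homoclinic}, $\dot{\theta}$ has a fixed sign and vanishes only at $\theta=\pi$. So $\theta(t)$ is monotone and bounded on that branch and converges to a limit. The limit must be an equilibrium on $\mathcal{S}_0$, and the only one is $(\pi,0)$.

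\textbf{Conclusion.} Because $(\pi,0)\in\omega^+$, there is a sequence $t_n\to\infty$ with $(\theta(t_n),\dot{\theta}(t_n))\to(\pi,0)$ (with $\theta$ taken modulo $2\pi$). In particular, for any $\delta_\theta,\delta_\omega>0$ some finite $t_n$ satisfies $|\theta(t_n)-\pi|\le\delta_\theta$ and $|\dot{\theta}(t_n)|\le\delta_\omega$. In fact the trajectory enters the set infinitely often, which is what the handoff to the local LQR requires.
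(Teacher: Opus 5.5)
Your proof is correct, but it takes a different route from the paper.

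\textbf{The paper's route.} The paper argues quantitatively. It picks $\varepsilon$ and $T_0$ with $|\energyError(t)|\le\varepsilon$ for $t\ge T_0$, uses sign preservation of $\energyError$, and then locates the near-approach explicitly in each case:
\begin{itemize}
\item a turning point with $|\theta_t-\pi|=\mathcal{O}(\sqrt{\varepsilon})$ when $\energyError<0$;
\item a crossing of $\theta=\pi$ with $|\dot\theta|\le\sqrt{2\varepsilon/(m\ell^2)}$ when $\energyError>0$;
\item motion along $\mathcal{S}_0$ toward the saddle when $\energyError\equiv 0$.
\end{itemize}

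\textbf{Your route.} You argue topologically instead. The set $\omega^{+}$ is nonempty, closed and invariant, and lies in $\mathcal{S}_0$ by Theorem~\ref{thm:swingup}. The closed loop restricted to $\mathcal{S}_0$ is exactly the conservative flow, so invariance plus closedness force $(\pi,0)\in\omega^{+}$.

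\textbf{What each buys.} The paper's route gives an explicit link between $\varepsilon$ and $\delta_\theta,\delta_\omega$, and a concrete picture of where in the swing the near-approach happens. However, in the first two cases it reasons as if the motion at small nonzero $\energyError$ were a conservative periodic orbit or rotation: a turning point ``recurs within one half-period'', and $\theta$ ``crosses $\pi$ in finite time''. That needs an additional argument, because the shaping term is still active there.

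Your route only uses the dynamics on $\mathcal{S}_0$, where this issue does not arise. It also yields a stronger conclusion: the state lies in the target set at arbitrarily large times. State it that way rather than ``enters infinitely often'', since a trajectory starting on $\mathcal{S}_0$ may enter once and stay.

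Your argument also carries over to the augmented law \eqref{eq:swing_control_law_aug} used in Theorem~\ref{thm:main}. There the pendulum subsystem alone is no longer autonomous, but the $(\dot{x},\theta,\dot{\theta})$ system is. The same argument with $\mathcal{M}_A$ in place of $\mathcal{S}_0$ gives $(0,\pi,0)\in\omega^{+}$, i.e.\ simultaneous smallness of $\dot{x}$, $|\theta-\pi|$ and $|\dot{\theta}|$.
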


\begin{proof}
Since $|\energyError(t)|$ is non-increasing with $\energyError(t)\to 0$, choose $\varepsilon\in(0,\,2mg\ell)$ and $T_0$ such that $|\energyError(t)|\leq\varepsilon$ for all $t\geq T_0$. By the sign-preservation argument in the proof of Theorem~\ref{thm:swingup} (a change of sign of $\energyError$ would require crossing the invariant set $\mathcal{S}_0$), the sign of $\energyError$ is fixed for $t\geq T_0$, giving three cases.

\emph{Oscillatory case ($\energyError<0$)}: With $|\energyError|<2mg\ell$ the motion is a periodic oscillation, so a turning point $\theta_t$ (where $\dot\theta=0$) recurs within one half-period and is therefore reached in finite time. There $1+\cos\theta_t = |\energyError|/(mg\ell) \leq \varepsilon/(mg\ell)$; writing $\theta_t = \pi-\alpha$ gives $1-\cos\alpha \leq \varepsilon/(mg\ell)$, hence $|\theta_t-\pi| = \mathcal{O}(\sqrt{\varepsilon})$. At this instant both target conditions hold: $\dot\theta = 0 \leq \delta_\omega$ and, for $\varepsilon$ small enough, $|\theta_t-\pi|\leq\delta_\theta$. 

\emph{Rotational case ($\energyError>0$)}: Here $\dot\theta$ never
vanishes, so it keeps its sign and $\theta$ advances monotonically,
crossing $\theta=\pi$ in finite time. At the crossing, $\cos\theta=-1$, so $\energyError = \tfrac{1}{2}m\ell^2\dot\theta^2$ and $|\dot\theta| = \sqrt{2\energyError/(m\ell^2)} \leq
\sqrt{2\varepsilon/(m\ell^2)}$; thus $|\theta-\pi| = 0 \leq \delta_\theta$ and, for $\varepsilon$ small enough, $|\dot\theta|\leq\delta_\omega$.

\emph{Separatrix case ($\energyError\equiv 0$)}: The state evolves along $\mathcal{S}_0$ toward the saddle $(\pi,0)$. Since the target set is a neighborhood of $(\pi,0)$ with $\delta_\theta,\delta_\omega>0$, the trajectory enters it in finite time, even though the saddle itself is approached only as $t\to\infty$.

In every case, choosing $\varepsilon>0$ small enough that both estimates fall within $\delta_\theta$ and $\delta_\omega$ ensures the trajectory enters $\{|\theta-\pi|\leq\delta_\theta,\ |\dot\theta|\leq\delta_\omega\}$ in finite time.
\end{proof}

\paragraph{Augmented Lyapunov function for cart-velocity regulation}
The Lyapunov function \eqref{eq:Lyapunov_function} governs only the pendulum subsystem; it leaves the cart velocity $\dot{x}$ unconstrained during swing-up (indeed, on $\mathcal{S}_0$ the feedforward term makes $\ddot{x}=0$, so the cart coasts at whatever velocity it has accrued). To additionally drive $\dot{x}\to 0$, we follow the energy-plus-velocity Lyapunov function construction of Yang et al.~\cite{yang2009swing} and augment the candidate with the cart-velocity term,
\begin{equation}
  V_{\mathrm{aug}}(\dot{x},\theta,\dot{\theta})
    = \tfrac{1}{2}\energyError^2 + \tfrac{\mu}{2}\dot{x}^2,
    \quad \mu > 0.
  \label{eq:Lyapunov_function_aug}
\end{equation}
Differentiating along \eqref{eq:eom} and factoring (Appendix~\ref{app:aug}) yields
\begin{equation}
  \dot{V}_{\mathrm{aug}}
    = \frac{\sigma}{\denom}
      \bigl[\, F + m s_\theta(\ell\dot{\theta}^2 + gc_\theta)\,\bigr],
    \quad
    \sigma \triangleq \mu\dot{x} - m\ell\dot{\theta}c_\theta\,\energyError.
  \label{eq:Vdot_aug_factored}
\end{equation}
Choosing the bracket equal to $-k\sigma$ gives the globally defined control law
\begin{equation}
  F_{\mathrm{aug}} = -k\sigma
    - m s_\theta(\ell\dot{\theta}^2 + gc_\theta),
    \qquad k>0,
  \label{eq:swing_control_law_aug}
\end{equation}
under which
\begin{equation}
  \dot{V}_{\mathrm{aug}} = -\frac{k\sigma^2}{\denom} \leq 0.
  \label{eq:Vdot_aug}
\end{equation}
Note that setting $\mu = 0$ in \eqref{eq:swing_control_law_aug} recovers the
unaugmented control law \eqref{eq:swing_control_law}.

\begin{theorem}[\textsc{Swing-up convergence under $V_{\mathrm{aug}}$}]
\label{thm:swingup_aug}
Under control law \eqref{eq:swing_control_law_aug} with $k,\mu>0$, every trajectory of \eqref{eq:eom} whose initial condition does not lie on the stable manifold of the downward equilibrium set $\mathcal{M}_B \triangleq \{(x,0,0,0) \:|\: x \in \mathbb{R}\}$ satisfies $\dot{x}(t)\to 0$ and
$\energyError(t)\to 0$ as $t\to\infty$; that is, the cart comes to rest
and $(\theta,\dot{\theta})$ converges to the homoclinic orbit
$\mathcal{S}_0$.
\end{theorem}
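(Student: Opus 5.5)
Because $V_{\mathrm{aug}}$ in \eqref{eq:Lyapunov_function_aug} does not depend on $x$, I will run LaSalle's invariance principle on the reduced state $(\dot{x},\theta,\dot{\theta})\in\mathbb{R}\times\mathbb{S}^1\times\mathbb{R}$. The $x$-equation is a pure integrator driven by $\dot{x}$, so the reduced right-hand side of \eqref{eq:eom} under \eqref{eq:swing_control_law_aug} is autonomous and locally Lipschitz (note $\denom\geq M>0$).

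By \eqref{eq:Vdot_aug}, $V_{\mathrm{aug}}$ is non-increasing. Its sublevel sets bound $|\dot{x}|$ and $|\energyError|$. On the cylinder, bounded $|\energyError|$ bounds $|\dot{\theta}|$. So every sublevel set is compact and positively invariant, and every reduced trajectory has a nonempty compact $\omega$-limit set contained in the largest invariant subset $\mathcal{M}$ of $\{\sigma=0\}$.

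To characterize $\mathcal{M}$, note that on an invariant trajectory with $\sigma\equiv0$ the control reduces to the pure feedforward $F=-ms_\theta(\ell\dot{\theta}^2+gc_\theta)$. As in the proof of Theorem~\ref{thm:swingup}, this gives $\ddot{x}\equiv0$, so $\dot{x}\equiv v$ is constant and the pendulum evolves conservatively, so $\energyError\equiv c$ is constant. The identity $\sigma\equiv0$ then reads $\mu v = m\ell c\,\dot{\theta}c_\theta$ along a conservative pendulum orbit, and I split into cases.
\begin{itemize}
\item If $c\neq0$ and the orbit is not an equilibrium, then $\dot{\theta}c_\theta$ is non-constant along it: it changes sign on oscillations, and on rotations it vanishes at $\theta=\pm\pi/2$ but not identically. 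Hence $c=0$ or the orbit is an equilibrium.
\item If $c=0$, then $\mu v=0$, so $v=0$.
\item At an equilibrium, $\dot{\theta}=0$ forces $v=0$ again.
\end{itemize}
Thus $\mathcal{M}=\{\dot{x}=0\}\times\bigl(\mathcal{S}_0\cup\{(0,0)\}\bigr)$, which in full coordinates is $\{\dot{x}=0,\ \energyError=0\}\cup\mathcal{M}_B$. I will double-check the rotation case: for $c>0$, $\dot{\theta}c_\theta$ takes both signs, since $c_\theta$ does while $\dot{\theta}$ does not.

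The main obstacle is excluding $\mathcal{M}_B$, because the monotonicity argument of Theorem~\ref{thm:swingup} breaks down: $|\energyError|$ alone is no longer monotone, since energy can trade against $\dot{x}^2$. I will use the dichotomy of the $\omega$-limit set instead. Along the reduced flow, $\omega(\mathbf{x}_0)$ is connected, invariant, and contained in $\mathcal{M}$. The two pieces $\{(0,0,0)\}$ and $\{0\}\times\mathcal{S}_0$ are disjoint closed sets at distance bounded below ($\energyError=-2mg\ell$ versus $0$). So $\omega(\mathbf{x}_0)$ lies entirely in one of them.

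If it is $\{(0,0,0)\}$, the reduced state converges to the point $(\dot{x},\theta,\dot{\theta})=(0,0,0)$. The full state then converges to $\mathcal{M}_B$, since $x(t)$ converges or at least its distance to $\mathcal{M}_B$ goes to zero. By definition, the initial condition then lies on the stable manifold of $\mathcal{M}_B$, which is excluded by hypothesis.

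Otherwise $\omega(\mathbf{x}_0)\subset\{\dot{x}=0\}\times\mathcal{S}_0$, which gives $\dot{x}(t)\to0$ and $\energyError(t)\to0$. Optionally I would add a linearization at $(0,0,0)$ to show it is a hyperbolic-type saddle, so that this stable manifold is thin and the convergence is almost global. That is not needed for the statement as written, which excludes the stable manifold directly.
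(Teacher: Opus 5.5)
Your proof is correct. It follows the paper's overall LaSalle template (characterize the invariant set as $\mathcal{M}_A\cup\mathcal{M}_B$, then discard $\mathcal{M}_B$), but three steps are done differently, and in each your version is tighter.

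\emph{Reduced state.} You apply LaSalle on $(\dot{x},\theta,\dot{\theta})\in\mathbb{R}\times\mathbb{S}^1\times\mathbb{R}$. The paper asserts that $\{V_{\mathrm{aug}}\leq c\}$ is compact in the full state. That is false, because $V_{\mathrm{aug}}$ does not involve $x$ and $\mathcal{M}_B$ is a line of equilibria. Your reduction is what actually licenses LaSalle.

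\emph{Invariant set.} The paper differentiates $\sigma$ on $\{\sigma=0\}$ and case-splits on $\ddot{\theta}c_\theta=\dot{\theta}^2 s_\theta$. You note instead that $\sigma\equiv0$ with $\energyError\equiv c\neq0$ forces $\dot{\theta}c_\theta$ to be constant along a conservative orbit, and you rule this out with a sign check. The content is the same, with less computation.

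\emph{Excluding $\mathcal{M}_B$.} The paper linearizes, shows instability, and then concludes convergence to $\mathcal{M}_A$. Instability plays no logical role in that step; what is needed is your argument. The $\omega$-limit set is connected, and the two pieces of $\mathcal{M}$ are disjoint compact sets, so it lies in one of them. Lying in the point $(0,0,0)$ means, by definition, starting on the stable manifold. The paper's linearization buys only the ``almost-global'' reading, i.e.\ that the excluded set is thin.

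If you add that optional linearization, use the full coupled $3\times3$ reduced Jacobian. Do not set $\dot{x}=0$ in the $(\theta,\dot{\theta})$ block as the paper does. With $a=k\mu/M$ and $b=2km^2g\ell^2/M$, the characteristic polynomial is $\lambda^3+(a-b/\ell)\lambda^2+(g/\ell)\lambda+ag/\ell$. By Routh--Hurwitz it has two roots in the open right half-plane and one negative real root. So $(0,0,0)$ is indeed a hyperbolic saddle whose stable manifold is a curve (a surface in full coordinates). The exclusion is therefore not vacuous, but the excluded set has measure zero. Hyperbolicity also gives exponential convergence on that manifold, so $x(t)$ does converge there, which removes your hedge.
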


\begin{proof}
$V_{\mathrm{aug}}\geq 0$ and, by \eqref{eq:Vdot_aug},
$\dot{V}_{\mathrm{aug}}\leq 0$, so every sublevel set
$\{V_{\mathrm{aug}}\leq c\}$ is compact and positively invariant. By
LaSalle's invariance principle~\cite{Khalil2002}, every trajectory
converges to the largest invariant set $\mathcal{M}$ in
$\{\dot{V}_{\mathrm{aug}}=0\}=\{\sigma=0\}$. Appendix~\ref{app:aug}
characterizes this set as
$\mathcal{M} = \mathcal{M}_A \cup \mathcal{M}_B$, where $\mathcal{M}_A = \bigl\{(x,\dot{x},\theta,\dot{\theta}) \:\big|\:
    \dot{x}=0,\ (\theta,\dot{\theta})\in\mathcal{S}_0 \bigr\}$ is
the cart-at-rest homoclinic orbit \eqref{eq:MA} and
$\mathcal{M}_B = \{(x,0,0,0)\}$ is the downward equilibrium set, and shows
$\mathcal{M}_B$ is unstable --- the closed-loop Jacobian there has
eigenvalues with positive real part in the $(\theta,\dot{\theta})$
directions. Hence every trajectory not initialized on the stable manifold
of $\mathcal{M}_B$ converges to $\mathcal{M}_A$: the cart comes to rest
($\dot{x}\to 0$) and $(\theta,\dot{\theta})\to\mathcal{S}_0$
($\energyError\to 0$).
\end{proof}

\begin{remark}
As in Corollary~\ref{cor:arrival}, convergence to $\mathcal{S}_0$ with $\dot{x}\to 0$ implies the state enters any neighborhood of the upright equilibrium in finite time; the augmented law additionally ensures $\dot{x}(T_{\mathrm{sw}})\approx 0$, so the cart-velocity contribution to the switching state is negligible --- easing the containment condition of local LQR's region of attraction.
\end{remark}

While $V_{\mathrm{aug}}$ regulates the cart \emph{velocity}, the cart \emph{position} $x$ remains unactuated in the Lyapunov analysis and may drift. The following bound shows the drift is nonetheless finite over the swing-up horizon, so the switching state cannot be arbitrarily far from
the origin.

\begin{lemma}[\textsc{Cart-drift bound under control law
\eqref{eq:swing_control_law_aug}}]\label{lem:cart_drift}
Starting from $x(0)=0$, $\dot{x}(0)=0$, and initial energy error $\energyError_0$, the cart states under \eqref{eq:swing_control_law_aug} satisfy
\begin{equation}
  |\dot{x}(t)| \leq \bar{v}
    \triangleq \frac{|\energyError_0|}{\sqrt{\mu}},
  \qquad
  |x(t)| \leq \bar{v}\,t \leq \bar{x}
    \triangleq \bar{v}\,T_{\mathrm{sw}},
  \label{eq:cart_bounds}
\end{equation}
for all $t\in[0,T_{\mathrm{sw}}]$ where $T_{\mathrm{sw}}$ denotes a finite switch time.
\end{lemma}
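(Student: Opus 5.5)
The plan is to use monotonicity of the augmented Lyapunov function \eqref{eq:Lyapunov_function_aug} along closed-loop trajectories to bound the velocity, and then integrate that velocity bound to control the position over $[0,T_{\mathrm{sw}}]$.

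\emph{Step 1 (velocity bound).} By \eqref{eq:Vdot_aug}, $\dot{V}_{\mathrm{aug}} = -k\sigma^2/\denom \leq 0$, since $\denom = M + m s_\theta^2 \geq M > 0$. Hence $V_{\mathrm{aug}}(t) \leq V_{\mathrm{aug}}(0)$ for every $t \geq 0$ at which the swing-up law \eqref{eq:swing_control_law_aug} is active, in particular on $[0,T_{\mathrm{sw}}]$. With $\dot{x}(0)=0$ we have $V_{\mathrm{aug}}(0) = \tfrac12 \energyError_0^2$. Dropping the nonnegative term $\tfrac12\energyError(t)^2$ gives
\begin{equation*}
\tfrac{\mu}{2}\dot{x}(t)^2 \leq \tfrac12\energyError(t)^2 + \tfrac{\mu}{2}\dot{x}(t)^2 \leq \tfrac12\energyError_0^2,
\end{equation*}
that is, $|\dot{x}(t)| \leq |\energyError_0|/\sqrt{\mu} = \bar{v}$.

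\emph{Step 2 (position bound).} Since $x(0)=0$, write $x(t) = \int_0^t \dot{x}(s)\,ds$. Then $|x(t)| \leq \int_0^t |\dot{x}(s)|\,ds \leq \bar{v}\,t$, and because $t \leq T_{\mathrm{sw}}$ this is at most $\bar{v}\,T_{\mathrm{sw}} = \bar{x}$.

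\emph{Main obstacle: well-posedness on $[0,T_{\mathrm{sw}}]$.} We must ensure the closed-loop solution exists on the whole interval, with no finite escape time. The state $(\dot{x},\energyError)$ is bounded by the sublevel-set argument of Theorem~\ref{thm:swingup_aug}. Boundedness of $\energyError$ bounds $\dot{\theta}^2$, because $\tfrac12 m\ell^2\dot{\theta}^2 = \energyError + mg\ell(1+c_\theta)$. The angle $\theta$ lives on $\mathbb{S}^1$, so the only state component that is not automatically bounded is $x$. Since $x$ does not appear on the right-hand side of \eqref{eq:eom} under \eqref{eq:swing_control_law_aug}, it is a pure integrator of the bounded $\dot{x}$ and cannot escape in finite time. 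The vector field is locally Lipschitz because $\denom > 0$, so the solution exists for all $t \geq 0$ and the bounds in \eqref{eq:cart_bounds} hold. The finiteness of $T_{\mathrm{sw}}$ itself is taken from the remark following Theorem~\ref{thm:swingup_aug} together with Corollary~\ref{cor:arrival}.
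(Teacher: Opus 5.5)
Your proposal is correct and follows the paper's proof essentially verbatim: monotonicity of $V_{\mathrm{aug}}$ together with $\dot{x}(0)=0$ gives $\tfrac{\mu}{2}\dot{x}(t)^2\le\tfrac12\energyError_0^2$, and integrating $|\dot{x}|$ from $x(0)=0$ gives $|x(t)|\le\bar{v}\,t\le\bar{x}$. Your extra well-posedness paragraph is also correct and covers something the paper leaves implicit: $x$ does not enter the closed-loop right-hand side, and the remaining states are bounded by $V_{\mathrm{aug}}$, so the solution cannot escape in finite time.
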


\begin{proof}
By \eqref{eq:Vdot_aug}, $V_{\mathrm{aug}}$ is non-increasing, so for all
$t$,
$\tfrac{\mu}{2}\dot{x}(t)^2 \leq V_{\mathrm{aug}}(t)
  \leq V_{\mathrm{aug}}(0) = \tfrac{1}{2}\energyError_0^2$,
using $\dot{x}(0)=0$. Hence
$|\dot{x}(t)| \leq |\energyError_0|/\sqrt{\mu} = \bar{v}$. Integrating
from $x(0)=0$ and applying the triangle inequality for integrals gives
$|x(t)| \leq \int_0^t |\dot{x}(\tau)|\,d\tau \leq \bar{v}\,t \leq
\bar{v}\,T_{\mathrm{sw}} = \bar{x}$.
\end{proof}

The velocity bound $\bar{v}$ is derived directly from the Lyapunov decrease and is independent of any trajectory integration. It is directly adjustable through the control parameter $\mu$ --- increasing $\mu$ tightens $\bar{v}$ at the cost of a stiffer control law.

\begin{remark}\label{rmk:conservative_bound}
The position bound $\bar{x}$ is valid but conservative, since it treats $|\dot{x}(t)|$ as fixed at its maximum throughout swing-up, whereas the $\tfrac{\mu}{2}\dot{x}^2$ term actively damps $\dot{x}$ toward zero, rendering the cart velocity oscillatory and mean-reverting. From numerical experiments (Section~\ref{sec:results}), the observed excursion is far smaller than
$\bar{x}$, and the gains $k,\mu$ can be tuned to keep the cart near the
origin at switching.
\end{remark}

\begin{remark}[\textsc{Hard position constraints via a potential well}]
\label{rmk:potential_well}
Lemma~\ref{lem:cart_drift} bounds the cart excursion but does not \emph{enforce} a prescribed track limit $|x|\leq x_{\mathrm{bound}}$. Where such a hard constraint is required, an additive displacement-penalizing term such as the logarithmic ``potential well''
of Chatterjee et al.~\cite{chatterjee2002swing},
$u_{\mathrm{well}} = k_{\mathrm{cw}}\,\mathrm{sgn}(x)\,
  \log\!\bigl(1 - |x|/x_{\mathrm{bound}}\bigr)$ with $k_{\mathrm{cw}} > 0$,
repels the cart from the boundary as $|x|\to x_{\mathrm{bound}}$.
Incorporating such a term into $F_{\mathrm{aug}}$ while retaining a clean $\dot{V}\leq 0$ certificate is nontrivial and left to future work; here we rely on the analytical finite bound \eqref{eq:cart_bounds} and gain tuning.
\end{remark}

\subsection{LQR Stabilization and Ellipsoidal Region of Attraction}
\label{subsec:lqr}

The energy-based laws of the previous subsection drive
$(\theta,\dot{\theta})$ onto the homoclinic orbit $\mathcal{S}_0$ and, under \eqref{eq:swing_control_law_aug}, the cart velocity to zero, while Lemma~\ref{lem:cart_drift} certifies the cart position remains within a finite excursion off the origin. By Corollary~\ref{cor:arrival}, the state therefore enters any prescribed neighborhood of the upright equilibrium
in finite time, with all four components bounded. What the energy laws do not provide is asymptotic stabilization \emph{at} the upright equilibrium: convergence along $\mathcal{S}_0$ is only to the orbit, and $(\pi,0)$ is approached but never reached in finite time. This motivates handing off to a local controller that renders the upright equilibrium asymptotically stable. 

We use an LQR designed via local linearization at the upright state, and certify an ellipsoidal region of attraction (RoA) into which the swing-up phase deposits the system state.

\paragraph{Linearization at the upright equilibrium}
Let $\phi = \theta - \pi$ and $\mathbf{z} = [x,\dot{x},\phi,\dot{\phi}]^\top$. For small $\phi$, $s_\theta \approx -\phi$, $c_\theta \approx -1$, and
$\denom \approx M$. Linearizing \eqref{eq:eom} about $\mathbf{z}^* = 0$,
\begin{equation}
  \dot{\mathbf{z}} = A\mathbf{z} + Bu,
  \label{eq:linearized}
\end{equation}
\begin{equation}
  A = \begin{bmatrix}
    0 & 1 & 0 & 0 \\
    0 & 0 & \frac{mg}{M} & 0 \\
    0 & 0 & 0 & 1 \\
    0 & 0 & \frac{(M+m)g}{M\ell} & 0
  \end{bmatrix},
  \qquad
  B = \begin{bmatrix} 0 \\ \frac{1}{M} \\ 0 \\ \frac{1}{M\ell} \end{bmatrix}.
  \label{eq:AB}
\end{equation}
The pair $(A,B)$ is controllable, and $A$ has a single unstable
eigenvalue $\lambda = +\sqrt{(M+m)g/(M\ell)}$.

\paragraph{LQR design}
For weights $Q \succeq 0$, $R > 0$, the optimal gain is
\begin{equation}
  K = R^{-1}B^\top P_{\mathrm{lqr}},
  \label{eq:lqr_gain}
\end{equation}
where $P_{\mathrm{lqr}} \succ 0$ solves the algebraic Riccati equation (ARE)
\begin{equation}
  A^\top P_{\mathrm{lqr}} + P_{\mathrm{lqr}}A
    - P_{\mathrm{lqr}}BR^{-1}B^\top P_{\mathrm{lqr}} + Q = 0,
  \label{eq:are}
\end{equation}
and local control law
\begin{equation}
    u_{\mathrm{lqr}}=-K\mathbf{z}
    \label{eq:LQR_control_law}
\end{equation}
asymptotically stabilizes the linearized system~\eqref{eq:linearized}, driving arbitrarily small $\mathbf{z} \to \mathbf{z}^*=\mathbf{0}$.

\paragraph{Region of attraction for the nonlinear system}
Consider the LQR value function as a Lyapunov candidate for the closed-loop
\emph{nonlinear} system,
\begin{equation}
  W(\mathbf{z}) = \mathbf{z}^\top P_{\mathrm{lqr}} \mathbf{z}.
  \label{eq:LQR_Lyapunov_function}
\end{equation}
Writing the closed loop dynamics under \eqref{eq:LQR_control_law} as
$\dot{\mathbf{z}} = (A-BK)\mathbf{z} + g(\mathbf{z})$ with $g(\mathbf{z})\sim\mathcal{O}(\|\mathbf{z}\|^2)$ denoting the higher-order terms, and using
$\dot{W}|_{\mathrm{lin}} = -\mathbf{z}^\top Q_e \mathbf{z}$ with
$Q_e = Q + K^\top R K \succ 0$,
\begin{equation}
  \dot{W}_{\mathrm{nl}} = -\mathbf{z}^\top Q_e \mathbf{z} + 2\mathbf{z}^\top P_{\mathrm{lqr}}\, g(\mathbf{z}).
  \label{eq:Wdot_nl}
\end{equation}

\begin{definition}[\textsc{Certified RoA of the LQR controller}]
\label{def:roa}
For the ellipsoidal sublevel set
$\Omega_c \triangleq \{\mathbf{z}\in\mathbb{R}^4 \:|\: \mathbf{z}^\top P_{\mathrm{lqr}} \mathbf{z}
  \leq c\}$,
let critical level
\begin{equation}
  c^* \triangleq \min_{\mathbf{z}:\,\dot{W}_{\mathrm{nl}}=0,\, \mathbf{z}\neq 0} W(\mathbf{z}).
  \label{eq:cstar}
\end{equation}
\end{definition}

\begin{lemma}[\textsc{LQR local asymptotic stability}]\label{lem:lqr}
Let $c \leq c^*$. For all $\mathbf{z}(0)\in\Omega_c$, the closed-loop trajectory under $u=-K\mathbf{z}$ \eqref{eq:LQR_control_law} satisfies $\mathbf{z}(t)\in\Omega_c$ for all $t\geq 0$ and
$\mathbf{z}(t)\to 0$ as $t\to\infty$; that is,
$[x,\dot{x},\theta,\dot{\theta}]^\top \to [0,0,\pi,0]^\top$
\end{lemma}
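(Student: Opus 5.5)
The plan is a standard Lyapunov sublevel-set argument with $W(\mathbf{z})=\mathbf{z}^\top P_{\mathrm{lqr}}\mathbf{z}$ from \eqref{eq:LQR_Lyapunov_function}, followed by LaSalle (or a direct strict-decrease argument). The central claim to establish is
\begin{equation*}
  \dot{W}_{\mathrm{nl}}(\mathbf{z})<0 \quad \text{for all } \mathbf{z}\in\Omega_c\setminus\{0\}, \qquad c\leq c^*.
\end{equation*}

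\textbf{Step 1: sign of $\dot{W}_{\mathrm{nl}}$ near the origin.} From \eqref{eq:Wdot_nl} and $g(\mathbf{z})=\mathcal{O}(\|\mathbf{z}\|^2)$ we have
\begin{equation*}
  \dot{W}_{\mathrm{nl}}\leq -\lambda_{\min}(Q_e)\|\mathbf{z}\|^2+C\|\mathbf{z}\|^3 .
\end{equation*}
Hence there is $r>0$ with $\dot{W}_{\mathrm{nl}}<0$ on $0<\|\mathbf{z}\|<r$. In particular, the nonzero zero set of $\dot{W}_{\mathrm{nl}}$ is bounded away from the origin, so $c^*>0$ whenever that set is nonempty; if it is empty, set $c^*=+\infty$.

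\textbf{Step 2: connectedness argument.} Suppose some $\mathbf{z}_1\in\Omega_c\setminus\{0\}$ had $\dot{W}_{\mathrm{nl}}(\mathbf{z}_1)\geq 0$. Consider the ray segment $s\mapsto s\mathbf{z}_1$, $s\in(0,1]$. Every point on it satisfies $W(s\mathbf{z}_1)=s^2W(\mathbf{z}_1)\leq c$. For small $s$ the value $\dot{W}_{\mathrm{nl}}(s\mathbf{z}_1)$ is negative by Step 1. Since $\dot{W}_{\mathrm{nl}}$ is continuous (the dynamics \eqref{eq:eom} are smooth, because $\denom\geq M>0$), the intermediate value theorem gives some $s_0\in(0,1]$ with $\dot{W}_{\mathrm{nl}}(s_0\mathbf{z}_1)=0$ and $s_0\mathbf{z}_1\neq 0$.

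By \eqref{eq:cstar} this yields $W(s_0\mathbf{z}_1)\geq c^*\geq c$, while $W(s_0\mathbf{z}_1)\leq c$. So equality must hold, which forces $s_0=1$, $W(\mathbf{z}_1)=c=c^*$, and $\mathbf{z}_1$ lies on the boundary with $\dot{W}_{\mathrm{nl}}(\mathbf{z}_1)=0$. Thus strict negativity holds on the interior of $\Omega_c$. The only possible failure is tangency on the boundary when $c=c^*$; this boundary tangency is the main obstacle.

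\textbf{Step 3: invariance.} $\Omega_c$ is compact because $P_{\mathrm{lqr}}\succ 0$. The inequality $\dot{W}_{\mathrm{nl}}\leq 0$ holds on all of $\Omega_c$, since boundary points either satisfy $\dot{W}_{\mathrm{nl}}<0$ or are critical points with $\dot{W}_{\mathrm{nl}}=0$. Therefore $W$ is nonincreasing along trajectories and $\Omega_c$ is positively invariant, giving $\mathbf{z}(t)\in\Omega_c$ for all $t\geq 0$. Completeness of solutions follows from boundedness.

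\textbf{Step 4: convergence.} Apply LaSalle's invariance principle~\cite{Khalil2002} on $\Omega_c$. The set $\{\dot{W}_{\mathrm{nl}}=0\}\cap\Omega_c$ is contained in $\{0\}\cup\bigl(\partial\Omega_c\cap\{\dot{W}_{\mathrm{nl}}=0\}\bigr)$.

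\begin{itemize}
\item If $c<c^*$, this set is just $\{0\}$, and the result is immediate.
\item If $c=c^*$, we argue differently. Any trajectory starting at a boundary tangency point cannot remain on the level $W=c^*$ unless $\dot{W}_{\mathrm{nl}}\equiv 0$ along it. Then it would stay in the zero set at level $c^*$, i.e.\ on a nonzero invariant set. I expect to exclude this generically, or to handle it by noting that $W(t)$ is nonincreasing and strictly decreasing once the trajectory leaves the tangency point into the interior. Failing that, I would state the result for $c<c^*$ and recover $c=c^*$ by a limiting remark.
\end{itemize}

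Once $W(t_1)<c^*$ for some $t_1$, the trajectory lies in the sublevel set $\Omega_{W(t_1)}$, with $W(t_1)<c^*$. The largest invariant subset of $\{\dot{W}_{\mathrm{nl}}=0\}$ there is $\{0\}$, so $\mathbf{z}(t)\to 0$, i.e.\ $[x,\dot{x},\theta,\dot{\theta}]^\top\to[0,0,\pi,0]^\top$.
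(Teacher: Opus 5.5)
For $c<c^*$ your argument is correct and is the paper's argument made rigorous. The paper asserts that \eqref{eq:cstar} yields $\dot W_{\mathrm{nl}}<0$ on $\Omega_c\setminus\{\mathbf 0\}$, then replaces LaSalle with a direct argument: if $W_\infty>0$, then $\dot W_{\mathrm{nl}}\le-\epsilon$ on a compact set away from the origin. Your Steps~1--2 (local negativity plus the ray/intermediate-value argument) supply the justification that assertion needs.

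The genuine gap is the case $c=c^*$, which you identify but do not close. Neither fallback works. ``Excluding it generically'' is not a proof. A limiting argument cannot recover it either: $\bigcup_{c<c^*}\Omega_c=\{W<c^*\}$ misses the level set $W=c^*$, and of its points only those with $\dot W_{\mathrm{nl}}<0$ immediately enter the interior. The paper's proof does not close this case either. Its strict-negativity claim is false at $c=c^*$, because the minimum in \eqref{eq:cstar} is attained: the constraint set is closed, bounded away from the origin by your Step~1, and $W$ is coercive. That minimizer is a point of $\Omega_{c^*}\setminus\{\mathbf 0\}$ with $\dot W_{\mathrm{nl}}=0$.

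In fact the case cannot be closed from \eqref{eq:cstar} alone. By your Step~2, $\{\dot W_{\mathrm{nl}}=0\}\cap\Omega_{c^*}=\{\mathbf 0\}\cup T$, where $T\triangleq\{W=c^*,\ \dot W_{\mathrm{nl}}=0\}$ is closed and does not contain the origin. Your LaSalle reasoning then shows that every trajectory from $\Omega_{c^*}$ converges to $\mathbf 0$ \emph{if and only if} the largest invariant subset of $T$ is empty. A forward orbit that stays on the level $c^*$ lies in $T$ and has a nonempty invariant $\omega$-limit set there; conversely, orbits inside such an invariant set never leave the level $c^*$. Nothing in the definition of $c^*$ excludes a nonzero equilibrium or periodic orbit on $T$. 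Note that the LQR closed loop does have nonzero equilibria (at $\phi=k\pi$, $k\neq 0$, $\dot x=\dot\phi=0$, with $x$ fixed by $K\mathbf z=0$), all lying in $\{\dot W_{\mathrm{nl}}=0\}$.

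So you should do one of two things:
\begin{itemize}
\item prove the lemma for $c<c^*$, which suffices for Theorem~\ref{thm:main} if its design condition is strengthened to $\sup_{\Sigma}W<c^*$; or
\item add that non-degeneracy hypothesis for $c=c^*$.
\end{itemize}

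A smaller point in Step~3 concerns $c=c^*$. There, $\dot W_{\mathrm{nl}}\le 0$ on the closed set $\Omega_{c^*}$ alone does not give forward invariance via the usual lemma. That lemma needs $\dot W_{\mathrm{nl}}\le 0$ on an open neighborhood, and $\dot W_{\mathrm{nl}}$ may be positive just outside a tangency point. Obtain invariance instead from forward invariance of each $\Omega_{c'}$, $c'<c^*$, together with continuous dependence on initial conditions and closedness of $\Omega_{c^*}$.
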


\begin{proof}
Since $c \leq c^*$, \eqref{eq:cstar} gives $\dot{W}_{\mathrm{nl}} < 0$ on $\Omega_c\setminus\{\mathbf{0}\}$. Forward invariance follows: if
$\mathbf{z}(0)\in\Omega_c$ then $W(\mathbf{z}(t)) \leq W(\mathbf{z}(0)) \leq c$ for all $t\geq 0$.
As $W$ is strictly decreasing and bounded below by zero,
$W(t)\to W_\infty \geq 0$; if $W_\infty > 0$ the trajectory would remain in a compact set away from the origin where
$\dot{W}_{\mathrm{nl}} \leq -\epsilon < 0$ for some positive $\epsilon$, contradicting convergence of $W$. Hence $W_\infty = 0$ and $\mathbf{z}(t)\to 0$.
\end{proof}

\subsection{Overall Switching Control Architecture (see \Cref{fig:flowchart})}
\label{subsec:switching}

The two phases are joined by a switching condition. Define the switching region
in the upright-error coordinates $\mathbf{z}=[x,\dot{x},\phi,\dot{\phi}]^\top$,
\begingroup
\fontsize{9.5}{11.5}\selectfont
\begin{equation}
  \Sigma \triangleq \bigl\{\mathbf{z}\in\mathbb{R}^4 :
    |x|\leq\delta_x,\ |\dot{x}|\leq\delta_v,\
    |\phi|\leq\delta_\phi,\ |\dot{\phi}|\leq\delta_\omega \bigr\},
  \label{eq:switching_region}
\end{equation}
\endgroup
for thresholds $\delta_x,\delta_v,\delta_\phi,\delta_\omega > 0$. The
combined controller is
\begin{equation}
  u = \begin{cases}
    F_{\mathrm{aug}} \text{ from \eqref{eq:swing_control_law_aug}}, & \mathbf{z}\notin\Sigma,\\[2pt]
    -K\mathbf{z} \text{ from \eqref{eq:LQR_control_law}}, & \mathbf{z}\in\Sigma.
  \end{cases}
  \label{eq:switching_controller}
\end{equation}
Switching is one-way: once $\mathbf{z}\in\Sigma$ the LQR is engaged and not
disengaged, since Lemma~\ref{lem:lqr} guarantees forward invariance of
$\Omega_{c^*}\supseteq\Sigma$.

For the handoff to be formally valid, the switching region must lie inside
the certified RoA, $\Sigma\subseteq\Omega_{c^*}$. Unlike $\Sigma$, which is
a box in the four state components, $\Omega_{c^*}$ is an ellipsoid coupling
them through $P_{\mathrm{lqr}}$; containment is therefore a joint condition
on the thresholds.

\begin{remark}[\textsc{Sufficient condition for containment}]
\label{rmk:containment}
By the Rayleigh--Ritz inequality,
$\sup_{\mathbf{z}\in\Sigma} W(\mathbf{z})
  \leq \lambda_{\max}(P_{\mathrm{lqr}})
       (\delta_x^2 + \delta_v^2 + \delta_\phi^2 + \delta_\omega^2)$,
so $\Sigma\subseteq\Omega_{c^*}$ holds whenever the right-hand side is at
most $c^*$. This bound is convenient but conservative; a tight check
evaluates $\sup_{\mathbf{z}\in\Sigma}W(\mathbf{z})$ directly, or verifies
$W(\mathbf{z}(T_{\mathrm{sw}}))\leq c^*$ on the realized trajectory (see numerical results in Section~\ref{sec:results}).
\end{remark}

\begin{figure}[t]
\centering
\begin{tikzpicture}[
    >=Stealth,
    every node/.style={font=\small},
    block/.style={rectangle, rounded corners=4pt, draw, thick,
                  text width=6.2cm, minimum height=1.0cm,
                  align=center, fill=white},
    decision/.style={diamond, draw, thick, aspect=2.2,
                     text width=1.8cm, align=center, fill=white, inner sep=1pt},
    arrow/.style={->, thick},
    node distance=0.55cm and 0cm,
]

\node[block, fill=blue!8]   (S1)
    {\textbf{Step 0.} Cart-pole dynamics $\dot{\mathbf{x}}=\mathbf{f}(\mathbf{x},u)$
     \eqref{eq:eom}; identify the upright equilibrium
     $\mathbf{x}^*=[0,0,\pi,0]^\top$};

\node[block, fill=teal!10, below=of S1] (S2)
    {\textbf{Step 1.} Energy error
     $\energyError=\tfrac{1}{2}m\ell^2\dot\theta^2 - mg\ell(1+\cos\theta)$
     \eqref{eq:energy_error}; the target $(\pi,0)$ lies on the homoclinic
     orbit $\mathcal{S}_0$ \eqref{eq:homoclinic}};

\node[block, fill=teal!10, below=of S2] (S3)
    {\textbf{Step 2.} Augmented Lyapunov candidate
     $V_{\mathrm{aug}}=\tfrac{1}{2}\energyError^2+\tfrac{\mu}{2}\dot{x}^2$
     \eqref{eq:Lyapunov_function_aug}; derive $F_{\mathrm{aug}}$
     \eqref{eq:swing_control_law_aug} so that $\dot V_{\mathrm{aug}}\le 0$}; 

\node[block, fill=teal!10, below=of S3] (S4)
    {\textbf{Step 3.} LaSalle principle: pendulum trajectories converge to $\mathcal{S}_0$ with
     $\dot{x}\to 0$ (Thm.~\ref{thm:swingup_aug}); $(\theta,\dot\theta)$ enters a neighborhood of $(\pi,0)$ in finite time (Cor.~\ref{cor:arrival})};

\node[decision, below=0.6cm of S4] (D1)
    {$\mathbf{z}\in\Sigma$? \eqref{eq:switching_region}, $\mathbf{z} \triangleq \mathbf{x} - \mathbf{x}^*$};

\node[block, fill=orange!10, right=0.7cm of D1, text width=2.4cm] (cont)
    {Continue swing-up: $u = F_{\mathrm{aug}}$
     \eqref{eq:swing_control_law_aug}};

\node[block, fill=purple!10, below=0.6cm of D1] (S5)
    {\textbf{Step 4.} Engage LQR \eqref{eq:LQR_control_law}: $u=-K\mathbf{z}$;
     handoff valid since $\Sigma\subseteq\Omega_{c^*}$ (Thm.~\ref{thm:main})};

\node[block, fill=purple!10, below=of S5] (S6)
    {\textbf{Step 5.} $W=\mathbf{z}^\top P_{\mathrm{lqr}}\mathbf{z}$;
     $\dot W_{\mathrm{nl}}<0$ and $\Omega_{c^*}$ forward-invariant
     (Lem.~\ref{lem:lqr})};

\node[block, fill=green!8, below=of S6] (S7)
    {\textbf{Result.}
     $[x,\dot x,\theta,\dot\theta]^\top\to[0,0,\pi,0]^\top$ for every
     $\mathbf{x}(0)$ in a compact set about $\mathbf{0}$, not on the stable
     manifold of $\mathcal{M}_B = \{(x,0,0,0) \:|\: x \in \mathbb{R}\}$};

\draw[arrow] (S1) -- (S2);
\draw[arrow] (S2) -- (S3);
\draw[arrow] (S3) -- (S4);
\draw[arrow] (S4) -- (D1);
\draw[arrow] (D1.east) -- node[above, font=\footnotesize]{No} (cont.west);
\draw[arrow] (cont.east) -- ++(0.35,0) |- (S2.east);
\draw[arrow] (D1) -- node[right, font=\footnotesize]{Yes} (S5);
\draw[arrow] (S5) -- (S6);
\draw[arrow] (S6) -- (S7);

\end{tikzpicture}
\caption{Overall control architecture. Steps~1--3 (swing-up phase) drive the pendulum to the homoclinic orbit $\mathcal{S}_0$ with the cart at rest and carry the state into the switching region $\Sigma$ in finite time;
decision block (diamond) is the one-way switching condition; Steps~4--5 (stabilization phase) hand off to the LQR, whose certified region of attraction $\Omega_{c^*}$ contains $\Sigma$. The swing-up phase continues until $\mathbf{z}\in\Sigma$.}
\label{fig:flowchart}
\end{figure}

\begin{figure*}[t]
    \centering
    \begin{subfigure}[b]{0.495\textwidth}
        \centering
        \includegraphics[trim={5cm 1.5cm 4.5cm 2cm}, clip=true, width=\linewidth]{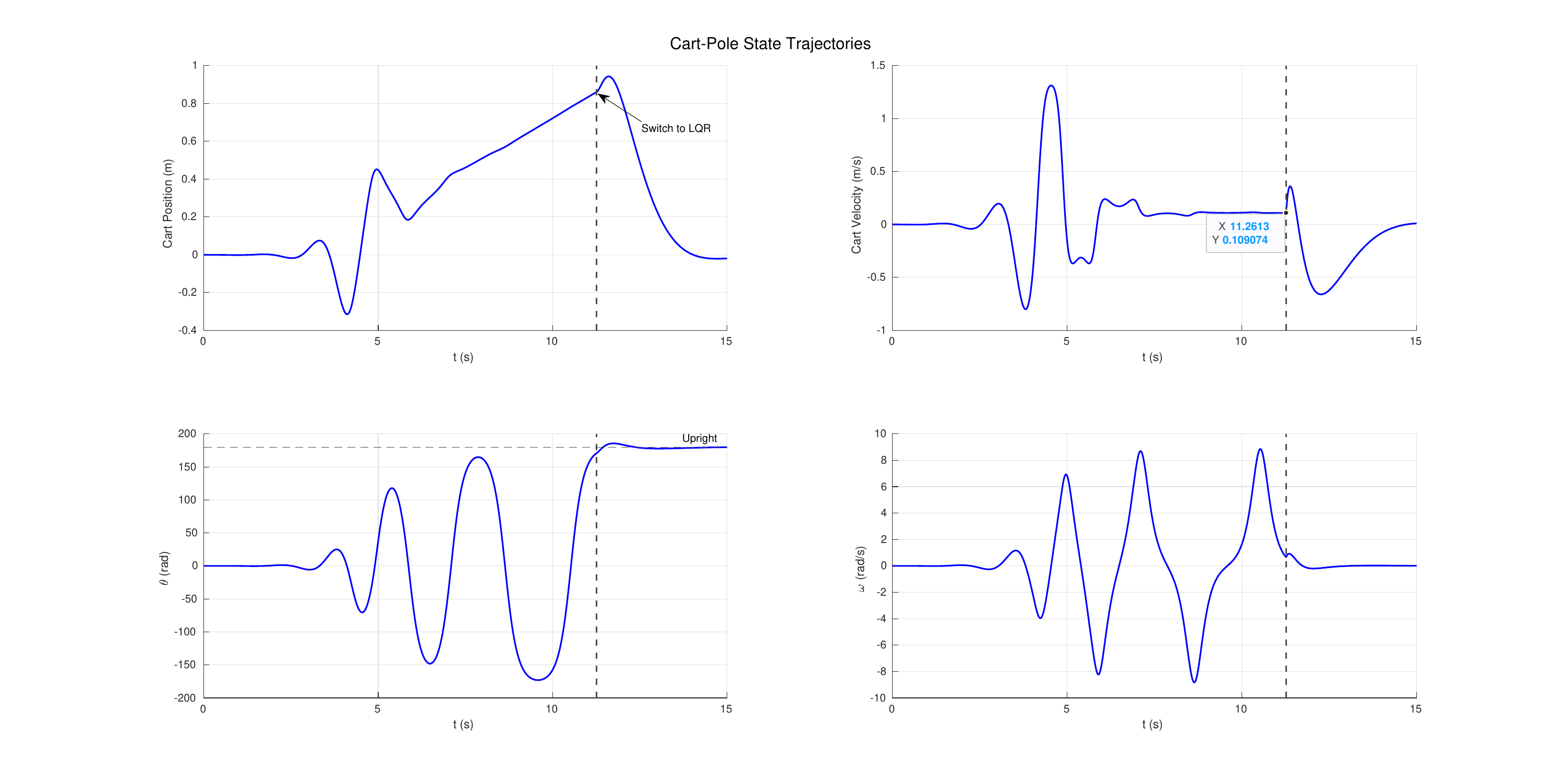}
        \caption{Control law \eqref{eq:swing_control_law} derived from \textit{unaugmented} $V$ \eqref{eq:Lyapunov_function}}
        \label{fig:unaugmented_states}
    \end{subfigure}
    \hfill 
    \begin{subfigure}[b]{0.495\textwidth}
        \centering
        \includegraphics[trim={5cm 1.5cm 4.5cm 2cm}, clip=true, width=\linewidth]{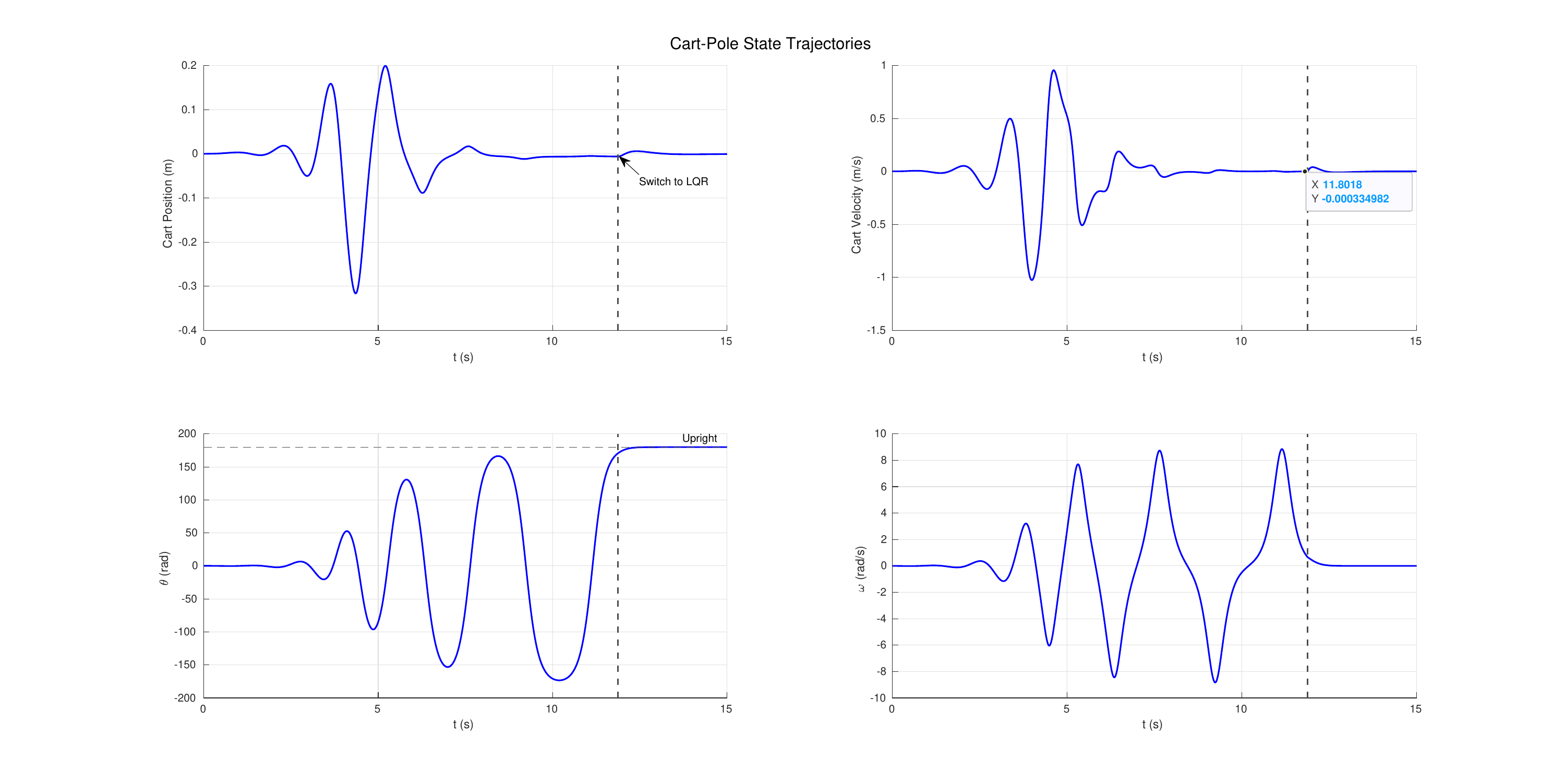}
        \caption{Control law \eqref{eq:swing_control_law_aug} derived from \textit{augmented} $V$ \eqref{eq:Lyapunov_function_aug}}
        \label{fig:augmented_states}
    \end{subfigure}
    \caption{Simulated state trajectories of cart-pole swing-up and stabilization starting from $\mathbf{x}(0) = [0, 0, 0, 0]^\top$, under two different control laws. Note that in the left plot -- (a), cart velocity converges to a constant (non-zero) value, and in the right plot -- (b), cart velocity converges to 0, by the design of control law \eqref{eq:swing_control_law_aug} that includes a velocity-damping component. The switch time $T_{\mathrm{sw}}$ in both cases is empirically observed to be $\sim12$\,s.}
    \label{fig:swing-up_states}
\end{figure*}

\begin{figure*}[t]
    \centering
    \begin{subfigure}[b]{0.4\textwidth}
        \centering
        \includegraphics[trim={2.25cm 0cm 2.25cm 0.7cm}, clip=true, width=0.95\linewidth]{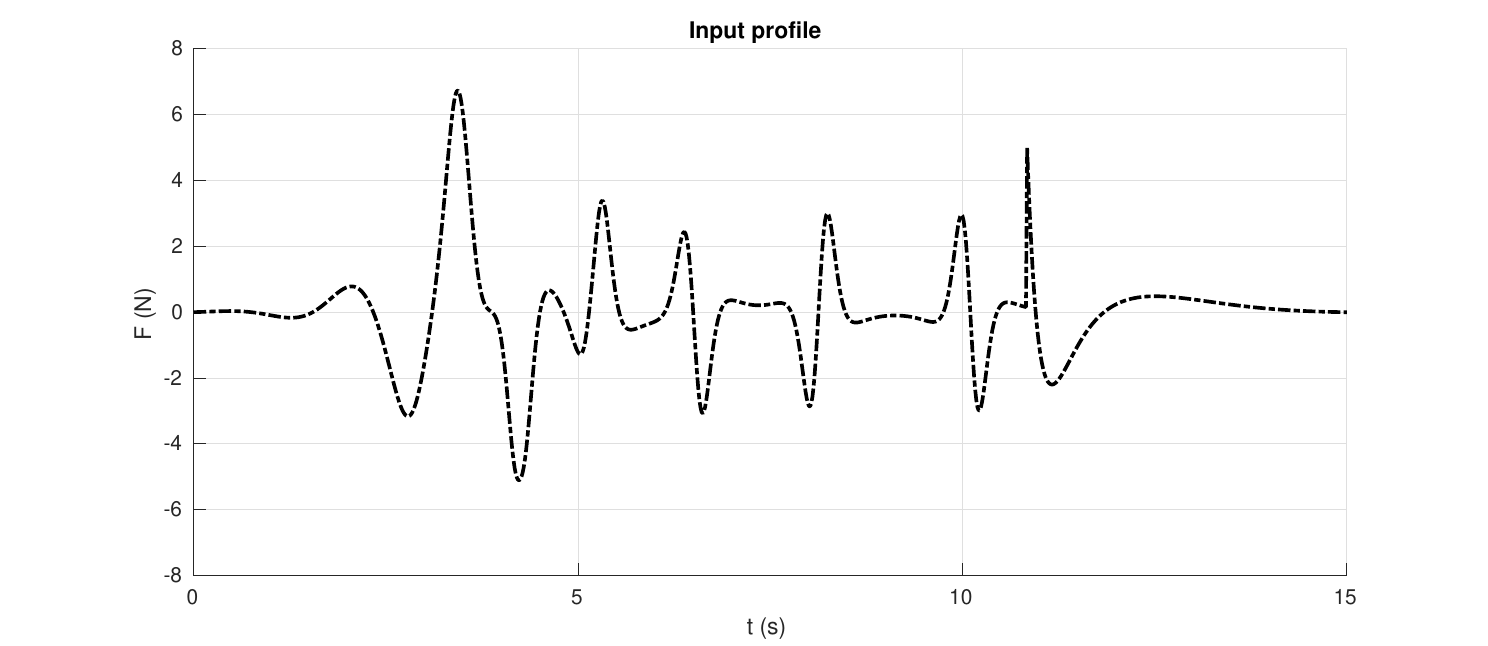}
        \caption{Control law \eqref{eq:swing_control_law} derived from \textit{unaugmented} $V$ \eqref{eq:Lyapunov_function}}
        \label{fig:unaugmented_inputprofile}
    \end{subfigure}
    \hspace{2em}
    \begin{subfigure}[b]{0.4\textwidth}
        \centering
        \includegraphics[trim={2.25cm 0cm 2.25cm 0.7cm}, clip=true, width=0.95\linewidth]{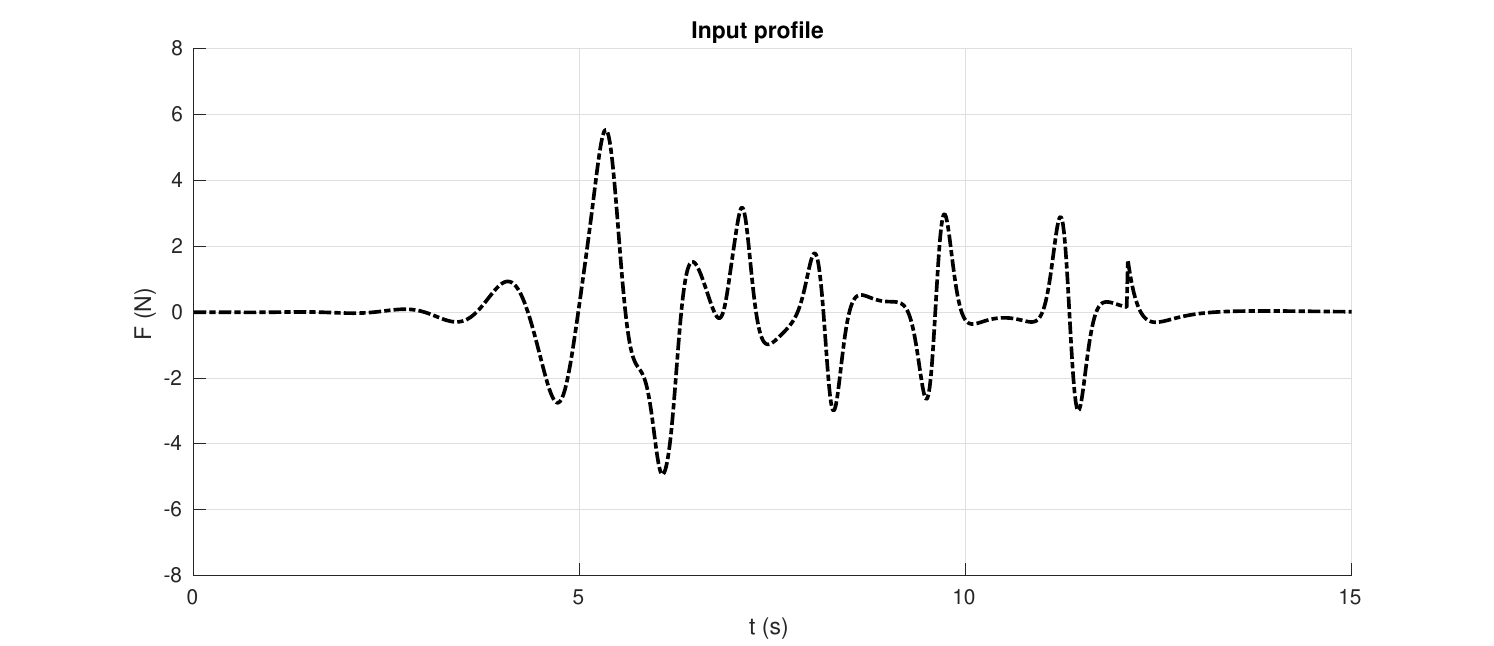}
        \caption{Control law \eqref{eq:swing_control_law_aug} derived from \textit{augmented} $V$ \eqref{eq:Lyapunov_function_aug}}
        \label{fig:augmented_inputprofile}
    \end{subfigure}
    \caption{Sample input profiles for cart-pole swing-up and stabilization.}
    \label{fig:swing-up_inputs}
\end{figure*}

\begin{theorem}[\textsc{Certified swing-up and stabilization}]\label{thm:main}
Let $P_{\mathrm{lqr}}$ solve \eqref{eq:are} and $c^*$ be as in \eqref{eq:cstar}. Choose gains $k,\mu>0$ and thresholds
$\delta_x,\delta_v,\delta_\phi,\delta_\omega>0$ such that
$\Sigma\subseteq\Omega_{c^*}$. Then for every initial condition in a compact set about the downward equilibrium, excluding the stable manifold of $\mathcal{M}_B = \{(x,0,0,0) \:|\: x \in \mathbb{R}\}$, the switching controller \eqref{eq:switching_controller}
drives the state to the upright equilibrium:
\begin{equation}
  [x(t),\dot{x}(t),\theta(t),\dot{\theta}(t)]^\top
    \longrightarrow [0,0,\pi,0]^\top.
  \label{eq:main_conv}
\end{equation}
\end{theorem}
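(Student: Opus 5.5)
The proof composes the swing-up results with the LQR certificate through the one-way switching rule \eqref{eq:switching_controller}, in three stages: (i) finite-time arrival of the swing-up trajectory in $\Sigma$; (ii) validity of the handoff, via $\Sigma\subseteq\Omega_{c^*}$; (iii) convergence under LQR by Lemma~\ref{lem:lqr}.

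\textbf{Stage (i): arrival in $\Sigma$.} Before the switch, the closed loop is exactly \eqref{eq:eom} under $F_{\mathrm{aug}}$, so Theorem~\ref{thm:swingup_aug} applies to any initial condition in the compact set that is off the stable manifold of $\mathcal{M}_B$. It gives $\dot{x}(t)\to 0$ and $\energyError(t)\to 0$. Hence there is a $T_1$ with $|\dot{x}(t)|\le\delta_v$ for all $t\ge T_1$. After $T_1$, I would rerun the argument of Corollary~\ref{cor:arrival} in the pendulum coordinates. Its proof uses only that $|\energyError|$ is small and keeps a fixed sign from some time on.
\begin{itemize}
\item Under $F_{\mathrm{aug}}$, $|\energyError|$ need not be monotone, but $\energyError\to0$ still holds.
\item Sign preservation still holds: $\mathcal{S}_0$ with $\dot x=0$ is invariant because $\ddot x=0$ there. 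The case $\energyError=0$ with $\dot x\neq 0$ needs a small separate check, or one simply uses that the approach is asymptotic.
\end{itemize}
This yields infinitely many, arbitrarily late times at which $|\phi|\le\delta_\phi$ and $|\dot\phi|\le\delta_\omega$.

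\textbf{The obstacle: the cart-position condition.} The genuinely delicate requirement is $|x|\le\delta_x$, since $x$ is not regulated by $V_{\mathrm{aug}}$. Lemma~\ref{lem:cart_drift} only bounds $|x(t)|\le\bar v\,T_{\mathrm{sw}}$, which can exceed $\delta_x$. Convergence $\dot x\to0$ makes $x(t)$ converge only if $\dot x$ is integrable, and nothing forces the limit to lie within $\delta_x$. I would first try to restrict the compact set of initial conditions, as the statement allows: take a small set around the downward equilibrium with $|x(0)|$ small. Then $\bar v=|\energyError_0|/\sqrt{\mu}$ is controlled and the drift bound can be checked against $\delta_x$ for the realized $T_{\mathrm{sw}}$. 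This is where the paper's ``compact set'' and the numerical verification of Remark~\ref{rmk:containment} enter: the guarantee becomes conditional on $\bar v\,T_{\mathrm{sw}}\le\delta_x$, or on directly checking $W(\mathbf{z}(T_{\mathrm{sw}}))\le c^*$. I expect the proof to state this condition explicitly as a hypothesis verified numerically. The compact set is taken uniformly, so that $T_{\mathrm{sw}}$ is bounded by continuity of the flow and compactness.

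\textbf{Stages (ii)--(iii): handoff and convergence.} Let $T_{\mathrm{sw}}$ be the first time $\mathbf{z}\in\Sigma$. Then $\mathbf{z}(T_{\mathrm{sw}})\in\Sigma\subseteq\Omega_{c^*}$. Applying Lemma~\ref{lem:lqr} with $c=c^*$ from initial time $T_{\mathrm{sw}}$ gives forward invariance of $\Omega_{c^*}$ and $\mathbf{z}(t)\to0$, so the one-way switch is consistent and no chattering occurs. Since $\mathbf{z}\to0$ means $[x,\dot x,\theta,\dot\theta]^\top\to[0,0,\pi,0]^\top$, this establishes \eqref{eq:main_conv}.
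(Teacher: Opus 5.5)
Your three stages are exactly the paper's proof. Phase~1 comes from Theorem~\ref{thm:swingup_aug}, Corollary~\ref{cor:arrival} and Lemma~\ref{lem:cart_drift}; the handoff from $\Sigma\subseteq\Omega_{c^*}$; Phase~2 from Lemma~\ref{lem:lqr}. Stages (ii)--(iii) are fine. The gap is in Stage~(i), and you located it correctly: nothing in the paper yields $|x|\le\delta_x$ at a moment when the other three components are within their thresholds. The paper's proof crosses this step with ``together with $\dot{x}\to 0$ and the bounded cart position, the system state enters $\Sigma$.'' That does not follow. Boundedness is not $|x|\le\delta_x$, and $\mathcal{M}_A$ in \eqref{eq:MA} contains rest states with arbitrary $x$. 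So a trajectory can converge to $\mathcal{M}_A$ with $|x(t)|>\delta_x$ for all large $t$, never trigger the switch, and never reach Phase~2. The paper does not state the missing condition as a hypothesis, as you expected; it simply asserts entry into $\Sigma$. What both arguments actually prove is a conditional statement: every admissible trajectory that reaches $\Sigma$ converges to $[0,0,\pi,0]^\top$.

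Your proposed repair does not close the gap, for three reasons.
\begin{enumerate}
\item $\bar v$ is not small near the downward equilibrium. There $|\energyError_0|\to 2mg\ell$, so $\bar v\to 2mg\ell/\sqrt{\mu}$. Also, for $\dot x(0)\neq 0$ the lemma must be redone with $\bar v=\sqrt{\energyError_0^2/\mu+\dot x(0)^2}$ and $|x(0)|$ added to the position bound.
\item $T_{\mathrm{sw}}$ is not uniformly bounded, so compactness plus continuity of the flow does not apply. Once the stable manifold of $\mathcal{M}_B$ is removed, the set of initial conditions is no longer compact. By continuous dependence on finite horizons, for every $T$ there are admissible initial states close enough to some $(x_0,0,0,0)$ to stay far from $\Sigma$ on $[0,T]$. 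Hence $\sup T_{\mathrm{sw}}=\infty$ on every neighborhood of the downward equilibrium, and $\bar v\,T_{\mathrm{sw}}\le\delta_x$ must fail there. Shrinking the set makes matters worse; the paper's own numbers give $\bar x\approx 11.8$\,m against an RoA extent of $2.24$\,m.
\item The condition is circular, since $T_{\mathrm{sw}}$ is defined as the entry time into $\Sigma$, whose existence is the point at issue. The fallback check $W(\mathbf{z}(T_{\mathrm{sw}}))\le c^*$ concerns containment, which $\Sigma\subseteq\Omega_{c^*}$ already gives for free, not arrival.
\end{enumerate}
Proving the theorem as stated needs a genuine bound on $x$ at the arrival time, for example an $x$-dependent term in the swing-up law with its own Lyapunov certificate. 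Otherwise you must weaken the conclusion, for instance by using the $x$-translation invariance of \eqref{eq:eom} to stabilize about the cart position at handoff.

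A smaller point concerns the pendulum and velocity conditions. Your sign-preservation argument fails under $F_{\mathrm{aug}}$ when $\energyError=0$ but $\dot x\neq 0$. Replace it with an $\omega$-limit argument in the reduced state $(\dot x,\theta,\dot\theta)$, where the closed loop is autonomous and the sublevel sets of $V_{\mathrm{aug}}$ are genuinely compact. Every nonempty compact invariant subset of $\{\dot x=0\}\times\mathcal{S}_0$ contains $(0,\pi,0)$. This gives arbitrarily late times at which $|\dot x|\le\delta_v$, $|\phi|\le\delta_\phi$ and $|\dot\phi|\le\delta_\omega$ hold simultaneously.
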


\begin{proof}

\emph{Phase 1 (swing-up), $\mathbf{z}\notin\Sigma$}: Under
control law~\eqref{eq:swing_control_law_aug}, Theorem~\ref{thm:swingup_aug} gives
$\dot{x}\to 0$ and $(\theta,\dot{\theta})\to\mathcal{S}_0$ for every initial condition not on the stable manifold of $\mathcal{M}_B$, and Lemma~\ref{lem:cart_drift} bounds the cart position. By Corollary~\ref{cor:arrival}, the pendulum enters any neighborhood of $(\pi,0)$ in finite time; together with $\dot{x}\to 0$ and the bounded cart position, the system state enters $\Sigma$ at a finite switching time
$T_{\mathrm{sw}}<\infty$.

\emph{Handoff at $t=T_{\mathrm{sw}}$}: By the design condition
$\Sigma\subseteq\Omega_{c^*}$, the switching state satisfies
$\mathbf{z}(T_{\mathrm{sw}})\in\Sigma\subseteq\Omega_{c^*}$.

\emph{Phase 2 (stabilization), $\mathbf{z}\in\Sigma$}: By
Lemma~\ref{lem:lqr}, $\Omega_{c^*}$ is forward-invariant under LQR law $u=-K\mathbf{z}$
and $\mathbf{z}(t)\to\mathbf{0}$, i.e.
$[x,\dot{x},\theta,\dot{\theta}]^\top\to[0,0,\pi,0]^\top$.
\end{proof}

\begin{remark}
The containment condition $\Sigma\subseteq\Omega_{c^*}$ enters Theorem~\ref{thm:main} as a design requirement on the thresholds rather than a consequence of the swing-up bounds --- the analytic cart-position bound $\bar{x}$ of Lemma~\ref{lem:cart_drift} is too loose to certify containment directly. In practice we verify $\mathbf{z}(T_{\mathrm{sw}})\in\Omega_{c^*}$ numerically on the realized trajectory (Section~\ref{sec:results}). Furthermore, the augmented law's velocity regulation ($\dot{x}(T_{\mathrm{sw}})\approx 0$) aids in placing the switching state well inside $\Omega_{c^*}$.
\end{remark}

\section{Numerical Simulations}
\label{sec:results}

For cart-pole parameters $M=1.0$\,kg, $m=0.1$\,kg, $\ell=0.5$\,m, $g=9.81$\,m\,s$^{-2}$, energy-control gains $k=2$, $\mu=1$, LQR cost matrices $Q=\mathrm{diag}(10,1,100,1)$, $R=0.5$, and initial conditions as small perturbations about $\mathbf{x}_0=[0,0,0,0]^\top$, the relevant quantities evaluate as follows.

\emph{Initial energy error:}
$|\energyError_0| = 2mg\ell = 0.981$\,J. 

\emph{Cart velocity bound \eqref{eq:cart_bounds}:}
$\bar{v} = {|\energyError_0|}/\sqrt{\mu} = 0.981$\,m\,s$^{-1}$, consistent with the peak cart velocity of $\approx 1.0$\,m\,s$^{-1}$ observed in simulation (see \Cref{fig:swing-up_states}).

\emph{Cart position bound \eqref{eq:cart_bounds}} with the empirically observed switch-time $T_{\mathrm{sw}}\approx 12$\,s: $\bar{x} = \bar{v}\,T_{\mathrm{sw}} \approx 11.772$\,m. This bound provides a valid reachability guarantee but is highly conservative (Remark~\ref{rmk:conservative_bound}): in simulation under the augmented law we observe $|x(t)| \lesssim 0.4$\,m for all $t \in [0, T_{\mathrm{sw}}]$
(\Cref{fig:augmented_states}).

\begin{figure*}[t]
    \centering
    \includegraphics[trim={3.25cm 0cm 3cm 0.2cm},clip=true,width=0.85\linewidth]{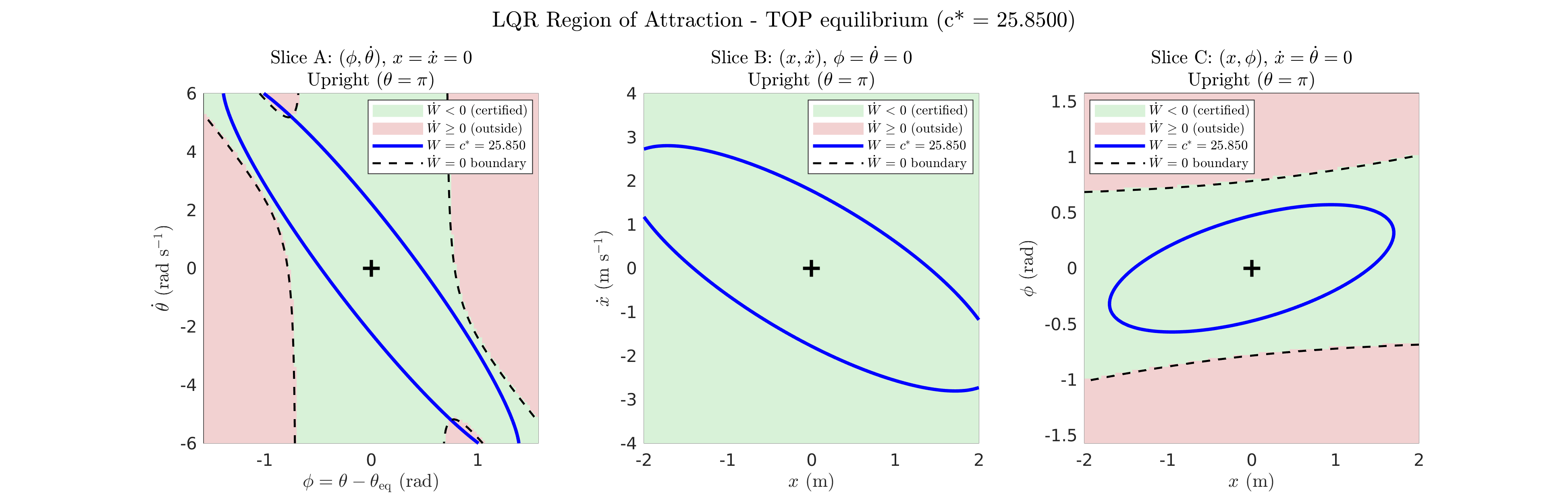}
    \caption{The ellipsoidal region-of-attraction certificate of the local
    LQR attractor synthesized at $\mathbf{x}_\mathrm{eq}=[0,0,\pi,0]^\top$, obtained from a radial line search. Recall from \eqref{eq:LQR_Lyapunov_function} that $W(\mathbf{z})=\mathbf{z}^{\top}P_{\mathrm{lqr}}\mathbf{z}$, where $\mathbf{z}=\mathbf{x}-\mathbf{x}_\mathrm{eq}$.}
    \label{fig:LQR_ROA}
\end{figure*}

\emph{LQR RoA (see \Cref{fig:LQR_ROA}):} A radial line search yields $c^* = 25.850$, with $\lambda_{\max}(P_{\mathrm{lqr}}) = 122.7$ for $P_{\mathrm{lqr}}$ obtained by solving \eqref{eq:are}. The maximum per-axis extents
$\sqrt{c^*[P_{\mathrm{lqr}}^{-1}]_{kk}}$ certify $|x|\leq 2.24$\,m, $|\dot{x}|\leq 4.35$\,m\,s$^{-1}$, $|\phi|\leq 1.44$\,rad, and $|\dot\phi|\leq 8.54$\,rad\,s$^{-1}$ within the RoA, $\Omega_{c^*}$ (\Cref{def:roa}).

\emph{Containment check:} The analytic cart-position bound $\bar{x}$ is far too loose to certify switching region $\Sigma\subseteq\Omega_{c^*}$ through the
Rayleigh--Ritz sufficient condition of Remark~\ref{rmk:containment}, exceeding the per-axis RoA extent of $2.24$\,m by nearly an order of magnitude. We therefore verify containment directly on the realized trajectory: evaluating $W(\mathbf{z}(T_{\mathrm{sw}}))$ from \eqref{eq:LQR_Lyapunov_function} yields $W(\mathbf{z}(T_{\mathrm{sw}}))\ll c^*$, confirming that the switching state lies well inside $\Omega_{c^*}$, thus validating the handoff. 

\subsection{Results}
Sample simulations are shown in
Figures~\ref{fig:swing-up_states}--\ref{fig:swing-up_inputs}. Under both control laws the energy-pumping phase is oscillatory: the pendulum accumulates amplitude over successive swings while the input force stays bounded within $\pm 8$\,N (\Cref{fig:swing-up_inputs}), and the state is handed to the LQR at $T_{\mathrm{sw}}\approx 12$\,s, after which it converges to the upright equilibrium $[0,0,\pi,0]^\top$. The two laws differ primarily in the steady-state cart velocity: the unaugmented law~\eqref{eq:swing_control_law} settles it to a non-zero constant (see \Cref{fig:unaugmented_states} and Remark~\ref{rmk:conservative_bound}), whereas the augmented law \eqref{eq:swing_control_law_aug} damps it to zero at switching (\Cref{fig:augmented_states}), corroborating the velocity-regulation design.

This difference carries over to the cart displacement off the origin. Under the unaugmented law the cart drifts up to $|x(t)|\lesssim 1$\,m (\Cref{fig:unaugmented_states}) before switching, whereas under the augmented law it stays within $|x(t)|\lesssim 0.4$\,m (\Cref{fig:augmented_states}) --- far inside both the conservative analytic bound $\bar{x}\approx 11.77$\,m and the LQR RoA's per-axis position extent of $2.24$\,m. The switching state therefore satisfies $W(\mathbf{z}(T_{\mathrm{sw}}))\ll c^*$, so the containment condition
$\Sigma\subseteq\Omega_{c^*}$ holds with a wide margin, validating the handoff between the energy-pumping swing-up controller and the stabilizing LQR.

\section{Conclusion}

We presented an end-to-end reachability analysis for cart-pole swing-up and stabilization, certifying that trajectories from a compact neighborhood of the downward equilibrium (excluding it) reach the upright equilibrium under a switched energy-based/LQR controller. The swing-up control design exploits the phase-space geometry of the conservative pendulum: the upright equilibrium lies on the homoclinic orbit separating oscillation from rotation, so an energy-shaping law that drives the energy error to zero steers the pendulum state onto this orbit. 
Convergence to the orbit follows from LaSalle's invariance principle; an augmented Lyapunov function additionally regulates the steady-state cart velocity to zero, for which we characterized the complete invariant set and proved instability of the downward equilibrium. A local LQR stabilizes the upright equilibrium, with a certified ellipsoidal region of attraction into which the swing-up phase delivers the state. Since the analytic cart-position bound is too loose to certify this containment directly, we verify it numerically on the realized trajectory, and simulations corroborate the theoretical analysis.


\section*{Acknowledgments}
The author thanks his advisor, Prof.~Michael Otte, for his guidance
throughout this work, and Prof.~Nikhil Chopra for insightful discussions that helped shape these ideas.



\clearpage
\appendices
\section{Augmented Lyapunov Function \eqref{eq:Lyapunov_function_aug}: Derivative and Invariant Set Characterization}
\label{app:aug}

This appendix derives $\dot{V}_{\mathrm{aug}}$, characterizes the invariant
set of the closed loop under augmented control law~\eqref{eq:swing_control_law_aug}, and
establishes instability of the downward equilibrium set $\mathcal{M}_B$.

\subsection{Time derivative of $V_{\mathrm{aug}}$}
Recall $V_{\mathrm{aug}} = \tfrac{1}{2}\energyError^2 +
\tfrac{\mu}{2}\dot{x}^2$. Differentiating along \eqref{eq:eom},
\begin{equation}
  \dot{V}_{\mathrm{aug}} = \energyError\dot{E} + \mu\dot{x}\ddot{x}.
  \label{eq:app_Vdot}
\end{equation}
From \eqref{eq:Edot_full},
\begin{equation}
  \energyError\dot{E}
    = \frac{-m\ell\dot{\theta}c_\theta\,\energyError}{\denom}\,F
    - \frac{m^2\ell\dot{\theta}s_\theta c_\theta
            (\ell\dot{\theta}^2+gc_\theta)}{\denom}\,\energyError,
  \label{eq:app_EEdot}
\end{equation}
and substituting $\ddot{x}$ from \eqref{eq:eom},
\begin{equation}
  \mu\dot{x}\ddot{x}
    = \frac{\mu\dot{x}}{\denom}\,F
    + \frac{\mu m\dot{x}s_\theta(\ell\dot{\theta}^2+gc_\theta)}{\denom}.
  \label{eq:app_xxdot}
\end{equation}
Adding \eqref{eq:app_EEdot} and \eqref{eq:app_xxdot} and collecting the
coefficient of $F$ and the autonomous remainder,
\begin{align}
\begin{split}
  \dot{V}_{\mathrm{aug}}
    = \: &\frac{1}{\denom}
      \underbrace{\bigl[\mu\dot{x} - m\ell\dot{\theta}c_\theta\energyError\bigr]}_{\sigma}
      F \\
   & + \frac{(\ell\dot{\theta}^2+gc_\theta)}{\denom}
      \underbrace{\bigl[\mu m\dot{x}s_\theta
        - m^2\ell\dot{\theta}s_\theta c_\theta\energyError\bigr]}_{m s_\theta\sigma}.
  \label{eq:app_factor}
\end{split}
\end{align}
With $\sigma \triangleq \mu\dot{x} - m\ell\dot{\theta}c_\theta\energyError$,
the autonomous bracket factors as $m s_\theta\sigma$, giving
\begin{equation}
  \dot{V}_{\mathrm{aug}}
    = \frac{\sigma}{\denom}\bigl[F + m s_\theta(\ell\dot{\theta}^2+gc_\theta)\bigr].
  \label{eq:app_key}
\end{equation}
Choosing $F$ so the bracket equals $-k\sigma$ yields
\eqref{eq:swing_control_law_aug}, hence
$\dot{V}_{\mathrm{aug}} = -k\sigma^2/\denom \leq 0$.

\subsection{Invariant set characterization}
By LaSalle, trajectories converge to the largest invariant subset of
$\mathcal{Z} = \{\sigma = 0\} = \{\mu\dot{x} = m\ell\dot{\theta}c_\theta\energyError\}$.
On any trajectory with $\sigma\equiv 0$, \eqref{eq:swing_control_law_aug}
reduces to $F = -m s_\theta(\ell\dot{\theta}^2+gc_\theta)$. Substituting
into \eqref{eq:eom} gives $\ddot{x}=0$ (so $\dot{x}$ is constant on
$\mathcal{Z}$) and $\dot{E}=0$ (so $\energyError$ is constant on
$\mathcal{Z}$; call it $\energyError_0$). Differentiating $\sigma$ with
$\ddot{x}=0$ and $\dot{\energyError}=0$,
\begin{equation}
  \dot{\sigma}\big|_{\sigma=0}
    = -m\ell\energyError_0(\ddot{\theta}c_\theta - \dot{\theta}^2 s_\theta) = 0,
  \label{eq:app_sigmadot}
\end{equation}
requiring either $\energyError_0 = 0$ or
$\ddot{\theta}c_\theta = \dot{\theta}^2 s_\theta$.

\emph{Case A} ($\energyError_0 = 0$): $\sigma=0$ then forces $\dot{x}=0$,
and the pendulum lies on $\mathcal{S}_0$ under conservative dynamics, on
which the entire orbit is invariant. The corresponding invariant set is
therefore
\begin{equation}
  \mathcal{M}_A = \bigl\{(x,\dot{x},\theta,\dot{\theta}) \:\big|\:
    \dot{x}=0,\ (\theta,\dot{\theta})\in\mathcal{S}_0 \bigr\},
  \label{eq:MA}
\end{equation}
i.e. the cart at rest with the pendulum on the homoclinic orbit. The
swing-up law delivers the state to $\mathcal{M}_A$; convergence
\emph{along} $\mathcal{S}_0$ to the upright equilibrium is completed by the LQR phase (Section~\ref{subsec:switching}).

\emph{Case B:} ($\energyError_0\neq 0$): evaluating $\ddot{\theta}$ on
$\mathcal{Z}$ using \eqref{eq:eom} with
$F=-m s_\theta(\ell\dot{\theta}^2+gc_\theta)$,
\begin{equation}
  \ddot{\theta}\big|_{\sigma=0}
    = \frac{-gs_\theta[(M+m)-mc_\theta^2]}{\ell\denom}
    = -\frac{g s_\theta}{\ell},
  \label{eq:app_thetaddot}
\end{equation}
using $(M+m)-mc_\theta^2 = \denom$. Substituting into
$\ddot{\theta}c_\theta = \dot{\theta}^2 s_\theta$ gives
$s_\theta = 0$ or $\dot{\theta}^2 = -gc_\theta/\ell$.

\emph{Sub-case B1:} ($s_\theta=0$, i.e. $\theta\in\{0,\pi\}$): invariance
forces $\dot{\theta}=0$. At $\theta=\pi$, $\energyError=0$, contradicting
$\energyError_0\neq 0$. At $\theta=0$, $\energyError=-2mg\ell\neq 0$
(consistent) and $\ddot{\theta}=0$; moreover $\sigma=0$ evaluates to
$\mu\dot{x} - m\ell(0)(1)(-2mg\ell) = \mu\dot{x} = 0$, so $\dot{x}=0$.
This gives the downward equilibrium set $\mathcal{M}_B = \{(x,0,0,0)\}$,
with the cart velocity pinned to zero by the augmentation.

\emph{Sub-case B2:} ($\dot{\theta}^2 = -gc_\theta/\ell$, $s_\theta\neq 0$):
invariance requires $\tfrac{d}{dt}(\dot{\theta}^2 + gc_\theta/\ell)=0$,
i.e. $\dot{\theta}(2\ddot{\theta} - gs_\theta/\ell)=0$; for
$\dot{\theta}\neq 0$ this needs $\ddot{\theta}=gs_\theta/(2\ell)$, which
contradicts \eqref{eq:app_thetaddot} unless $s_\theta=0$. No new invariant trajectories arise.

Collecting the two cases, invariant set
$\mathcal{M} = \mathcal{M}_A \cup \mathcal{M}_B$.

\subsection{Instability of $\mathcal{M}_B$}
Linearizing the system~\eqref{eq:eom} under control law~\eqref{eq:swing_control_law_aug} about $(x,0,0,0)$, with
$s_\theta\approx\theta$, $c_\theta\approx 1$, $\denom\approx M$, and $\energyError\approx -2mg\ell$, the Jacobian is
\begin{equation}
  A_\downarrow = \begin{bmatrix}
    0 & 1 & 0 & 0 \\
    0 & -k\mu & 0 & -2km^2g\ell^2 \\
    0 & 0 & 0 & 1 \\
    0 & k\mu/\ell & -g/\ell & 2km^2g\ell
  \end{bmatrix}.
  \label{eq:app_Adown}
\end{equation}
The characteristic polynomial of the $(\theta,\dot{\theta})$ block
(setting $\dot{x}=0$) is
$\lambda^2 - 2km^2g\ell\,\lambda + g/\ell = 0$, whose roots have product
$g/\ell > 0$ and sum $2km^2g\ell > 0$; both eigenvalues have positive real
part. Hence $\mathcal{M}_B$ is unstable: any perturbation in $\theta$ or
$\dot{\theta}$ drives the trajectory away from the downward equilibrium.

\end{document}